\documentclass{article}

\usepackage{adjustbox}

\usepackage[final]{neurips_2025}

\usepackage[utf8]{inputenc} 
\usepackage[T1]{fontenc}    
\usepackage{hyperref}       
\usepackage{url}            
\usepackage{booktabs}       
\usepackage{amsfonts}       
\usepackage{nicefrac}       
\usepackage{microtype}      
\usepackage{xcolor}         
\usepackage[table]{xcolor}
\usepackage{amsmath}
\usepackage{amssymb}
\usepackage{array}
\usepackage{amsthm}
\usepackage{graphicx}
\usepackage{multirow}
\usepackage{bbding}
\usepackage{tabularx}

\newtheorem{definition}{Definition}
\newtheorem{assumption}{Assumption}
\newtheorem{lemma}{Lemma}
\newtheorem{theorem}{Theorem}
\usepackage[skip=4pt]{caption}
\PassOptionsToPackage{numbers, compress}{natbib}

\title{\textsc{Sneakdoor}: Stealthy Backdoor Attacks against Distribution Matching-based Dataset Condensation}

\author{%
  He Yang\textsuperscript{1,2}\thanks{Equal contribution.},
  Dongyi Lv\textsuperscript{1}\footnotemark[1],
  Song Ma\textsuperscript{1,2},
  Wei Xi\textsuperscript{1,2}\thanks{Corresponding author.},
  Jizhong Zhao\textsuperscript{1,2} \\
  \textsuperscript{1} School of Computer Science and Technology, Xi'an Jiaotong University, Xi'an, China \\
  \textsuperscript{2} National Key Laboratory of Human-Machine Hybrid Augmented Intelligence, \\
  Xi'an Jiaotong University, Xi'an, China \\
  \texttt{yanghe73@xjtu.edu.cn}, \texttt{\{lvdongyi, song.ma\}@stu.xjtu.edu.cn}, \\
  \texttt{\{xiwei, zjz\}@xjtu.edu.cn}
}

\begin{document}
\maketitle

\begin{abstract}
  Dataset condensation aims to synthesize compact yet informative datasets that retain the training efficacy of full-scale data, offering substantial gains in efficiency. Recent studies reveal that the condensation process can be vulnerable to backdoor attacks, where malicious triggers are injected into the condensation dataset, manipulating model behavior during inference. While prior approaches have made progress in balancing attack success rate and clean test accuracy, they often fall short in preserving stealthiness, especially in concealing the visual artifacts of condensed data or the perturbations introduced during inference. To address this challenge, we introduce \textsc{Sneakdoor}, which enhances stealthiness without compromising attack effectiveness. \textsc{Sneakdoor} exploits the inherent vulnerability of class decision boundaries and incorporates a generative module that constructs input-aware triggers aligned with local feature geometry, thereby minimizing detectability. This joint design enables the attack to remain imperceptible to both human inspection and statistical detection. Extensive experiments across multiple datasets demonstrate that \textsc{Sneakdoor} achieves a compelling balance among attack success rate, clean test accuracy, and stealthiness, substantially improving the invisibility of both the synthetic data and triggered samples while maintaining high attack efficacy. The code is available at \url{https://github.com/XJTU-AI-Lab/SneakDoor}.
\end{abstract}

\section{Introduction}
Dataset Condensation (DC)~\cite{liu2023slimmable,hemultisize,zhao2023dm,yang2023efficient,yin2023squeeze,zhang2024m3d} has recently emerged as a powerful paradigm for synthesizing compact training datasets that retain the learning efficacy of their full-sized counterparts, offering substantial benefits in terms of computation, memory, and deployment efficiency. However, DC introduces inherent vulnerabilities to backdoor attacks~\cite{li2022backdoor,guo2022overview,li2022backdoors,wu2022backdoorbench}, where malicious triggers can be injected into the distilled samples during the condensation process. Once compromised, the distilled dataset can disseminate malicious behaviors across downstream models, undermining model integrity and posing serious security threats.

A growing body of work demonstrates that malicious triggers, once implanted into the distilled set, can persist across downstream training and inference, leading to consistent and targeted misclassification~\cite{liu2023backdoor,chung2024rethinking,zheng2023rdm}. One of the earliest approaches is the Naive Attack~\cite{liu2023backdoor}, which directly adds a fixed visual pattern (typically a static patch) to instances from clean training samples before condensation. While conceptually simple, this method suffers from limited attack success rates, as the uniform trigger tends to degrade through the condensation process. To enhance attack effectiveness, Doorping~\cite{liu2023backdoor} introduces a bilevel optimization framework that iteratively updates both the distilled data and the backdoor trigger during training. Doorping better preserves the trigger semantics and achieves stronger attack success rate. However, it incurs significant computational cost due to its bilevel nature and lacks a theoretical foundation. A more recent work~\cite{chung2024rethinking} adopts a kernel-theoretic lens to reinterpret backdoor vulnerability in condensation. They propose two variants, simple-trigger and relax-trigger. The former attack focuses exclusively on minimizing the generalization gap, aiming to ensure that the backdoor learned during condensation reliably transfers to test-time behavior. The relax-trigger introduces a joint optimization objective that simultaneously reduces projection loss (mismatch between synthetic and clean distributions), conflict loss (interference between clean and poisoned instances), and the generalization gap. Notably, relax-trigger maintains high attack success rate while avoiding the computational overhead of bilevel optimization.

\begin{figure}
    \centering
    \includegraphics[width=0.8\linewidth]{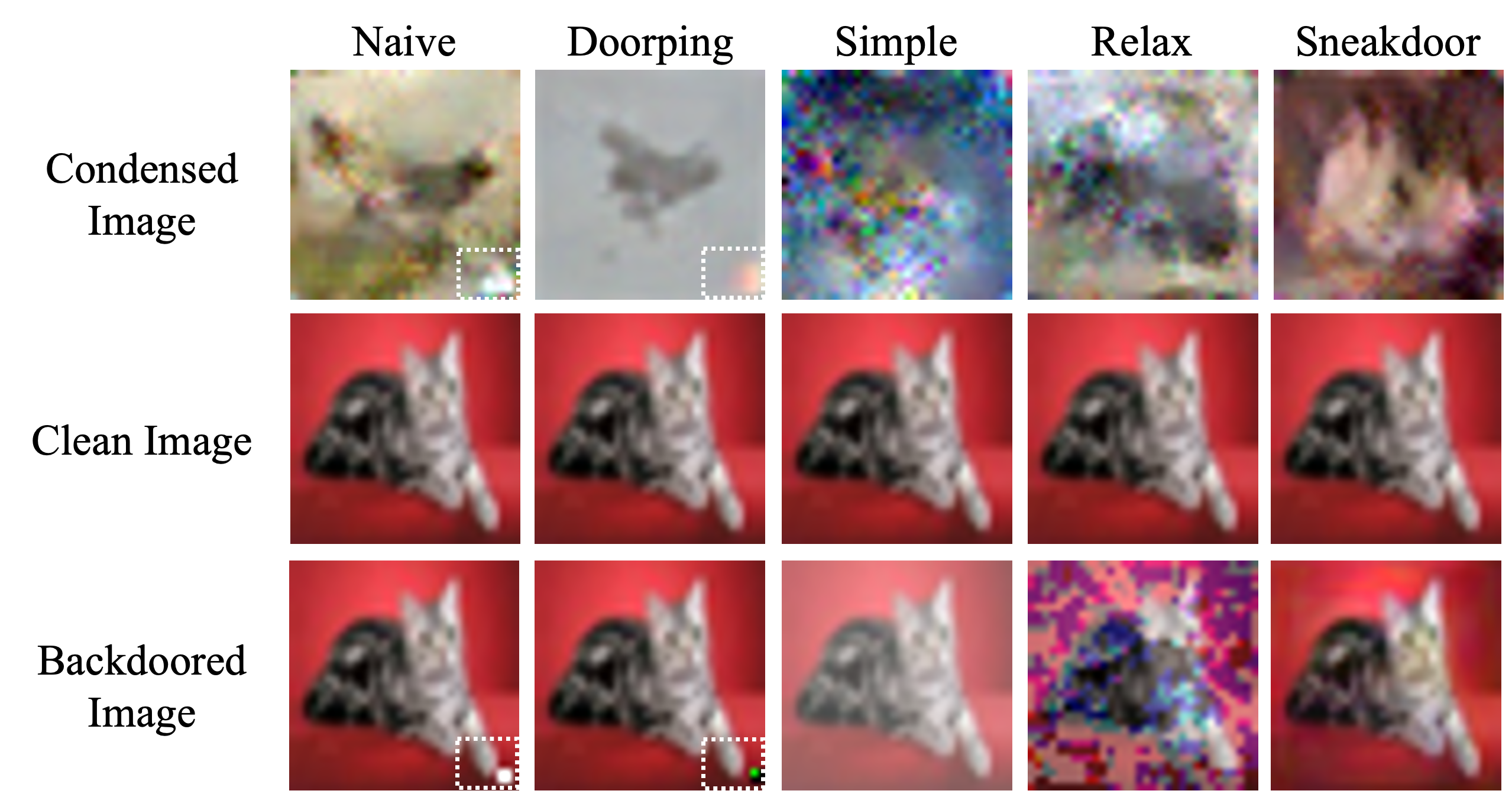}
    \caption{Stealthiness Illustration}
    \label{fig:example}
    \vspace{-6pt}
\end{figure}

However, existing approaches fall short of achieving a well-calibrated trade-off among attack success rate (ASR), clean test accuracy (CTA), and stealthiness (STE). While some methods attain high ASR or maintain acceptable CTA, they frequently neglect STE, a critical dimension that reflects the visual and statistical imperceptibility of both the distilled data and the triggered inputs (See Figure~\ref{fig:example}). This oversight is particularly damaging, without sufficient stealthiness, even highly effective attacks become vulnerable to detection, significantly limiting their practical viability. This persistent imbalance motivates our proposed method, \textsc{Sneakdoor}, which leverages input-aware trigger generation and decision boundary sensitivity, achieving a more favorable balance among ASR, CTA, and STE.

Specifically, \textsc{Sneakdoor} consists of two stages, (1) Trigger Generation and (2) Backdoor Injection. In the first stage, a generative network is trained to produce input-aware triggers tailored to individual samples. By aligning each trigger with the local semantic content of its host image, the perturbations remain visually coherent and difficult to isolate. In the second stage, the backdoor injection is formulated as an optimization problem. The generated triggers are embedded into a subset of clean samples to form a poisoned subset. These triggered samples are then incorporated into the training set prior to condensation, allowing the distilled dataset to encode backdoor behavior alongside clean task representations. As a result, downstream models trained on the synthesized data exhibit the intended malicious behavior without sacrificing generalization to clean inputs.

Our contributions are summarized below:

\begin{itemize}
    \item 
    We present the first investigation of backdoor attacks against distribution matching-based dataset condensation, with a focus on jointly optimizing ASR, CTA, and STE.
    \item 
    We provide a theoretical analysis of stealthiness concerning \textsc{Sneakdoor}, offering formal guarantees and insights into the conditions under which backdoor signals remain undetectable throughout the condensation and training process.
    \item 
    Extensive experiments across six datasets demonstrate that \textsc{Sneakdoor} consistently outperforms existing methods in achieving a superior balance across ASR, CTA, and STE.
\end{itemize}

\section{Related Work}
\textbf{Distribution Matching-based Dataset Condensation:} Dataset condensation (DC) aims to synthesize a compact set of synthetic samples that can replace large-scale datasets while preserving comparable model performance. Among various condensation paradigms, distribution matching (DM)-based methods have emerged as a leading approach due to their scalability, generality, and empirical effectiveness. Unlike earlier techniques based on gradient matching or training trajectory alignment, DM-based methods directly align statistical or feature-level distributions between real and synthetic data. A seminal example is DM~\cite{zhao2023dm}, which matches the second-order moments (covariance) of feature embeddings extracted by random encoders. A core formulation in distribution matching-based dataset condensation leverages the maximum mean discrepancy (MMD) to quantify the distance between the feature distributions of real and synthetic samples in a high-dimensional embedding space. The objective is to minimize this discrepancy over the synthetic set \(\mathcal{S}\), ensuring statistical alignment with the original dataset \(\mathcal{T}\). Specifically, the optimization problem is defined as: $\min_{\mathcal{S}}\mathbb{E}_{\boldsymbol{\theta} \sim P_{\boldsymbol{\theta}}} \| \frac{1}{|\mathcal{T}|} \sum_{i=1}^{|\mathcal{T}|} \psi_{\boldsymbol{\vartheta}}(\mathcal{A}(\boldsymbol{x}_i, \omega)) - \frac{1}{|\mathcal{S}|} \sum_{j=1}^{|\mathcal{S}|} \psi_{\boldsymbol{\vartheta}}(\mathcal{A}(\boldsymbol{s}_j, \omega)) \|^2$, where \(\psi_{\boldsymbol{\vartheta}}\) is a randomly initialized and fixed embedding function, and \(\mathcal{A}(\cdot, \omega)\) denotes a differentiable Siamese augmentation operator applied to both real and synthetic samples, parameterized by \(\omega\). This formulation encourages the synthetic set to preserve the statistical structure of the real dataset under randomized transformations, thereby promoting generalization across model initializations drawn from \(P_{\boldsymbol{\theta}}\).

Subsequent extensions, such as IDM and DAM, enhance class-conditional alignment through kernel-based moment matching, adaptive feature regularization, and encoder updates, yielding improved performance. IDM introduces practical enhancements to the original distribution matching framework, incorporating progressive feature extractor updates, stronger data augmentations, and dynamic class balancing to improve generalization. In parallel, DataDAM leverages attention map alignment to better preserve spatial semantics, guiding synthetic samples to activate similar regions as real data while maintaining computational efficiency. These methods advance the state of dataset condensation by demonstrating that richer supervision and adaptive training dynamics are critical for generating high-fidelity synthetic datasets.

\textbf{Backdoor Attacks against Dataset Condensation:} Backdoor attacks aim to manipulate model behavior at inference time by injecting carefully crafted triggers into a subset of training data. When effective, the model performs normally on clean inputs but consistently misclassifies inputs containing the trigger. While extensively studied in standard supervised learning, backdoor attacks in the context of dataset condensation have only recently received attention. A pioneering study by Liu et al.~\cite{liu2023backdoor} introduces backdoors by poisoning real data before dataset condensation. Their Naive Attack appends a fixed trigger to target-class samples before condensation, but suffers from trigger degradation and reduced attack efficacy due to the synthesis process. To address this, Doorping employs a bilevel optimization scheme that jointly refines the trigger and the synthetic data. Although more effective, it incurs substantial computational overhead. More recently, Chung et al.~\cite{chung2024rethinking} provide a kernel-theoretic perspective on backdoor persistence in condensation. They propose simple-trigger, which minimizes the generalization gap of the backdoor effect, and relax-trigger, which further reduces projection and conflict losses for improved robustness. 

Importantly, existing approaches focus predominantly on maximizing ASR or preserving CTA, often overlooking STE, which is a critical factor for realistic attacks. In contrast, we propose \textsc{Sneakdoor}, a novel framework that explicitly addresses the ASR–CTA–STE trade-off through input-aware trigger generation and stealth-aware integration into distribution matching-based condensation.

\section{Methodology}
\subsection{Threat Model}
\textbf{Attack Scenario.}
We consider a \textit{collaborative setting} where one entity possesses a high-quality dataset and shares a compact version with another party via dataset condensation, due to privacy or bandwidth constraints. The condensed dataset is typically regarded as a trustworthy proxy for training. However, this trust can be exploited. A malicious provider, with full access to the original data and sole control over the condensation process, can embed backdoor triggers into the synthetic data. These triggers, while preserving high utility for clean tasks, can cause targeted misclassification in downstream models. 

Moreover, our threat model does \emph{not} assume that the attacker knows the downstream (victim) model architecture. This upstream threat underscores a critical vulnerability: even limited data sharing can serve as a potent attack vector when the condensation process is adversarially controlled. 

\textbf{Attacker’s Goal.} The attacker’s objective in backdooring condensed datasets is inherently multi-faceted, requiring a delicate balance among three goals: stealthiness (STE), attack success rate (ASR), and clean test accuracy (CTA). Due to space constraints, detailed definitions of these metrics are provided in Appendix A.

\subsection{Stealthy Backdoor Attack against Dataset Condensation}
(1) \textit{Trigger Generation}

Trigger generation starts by identifying the source–target class pair $(i,j)$ with the highest inter-class misclassification rate:
\begin{equation}
    \mathcal{O}_{i\to j}=\frac{1}{N}\sum_{k=1}^N \mathbb{I}\big(g_{\theta_c}(f_{\theta_f}(x_k))=j\big), \quad x_k\in\mathcal{T}_i,
    \label{eq:mc_bv}
\end{equation}
where $\mathcal{T}_i$ represents the subset of the original dataset $\mathcal{T}$ with ground-truth label $i$, $f_{\theta_f}$ and $g_{\theta_c}$ denote the feature extractor and classifier, respectively, $\mathbb{I}(\cdot)$ is the indicator function that equals $1$ if the classifier assigns the sample $x_k$ to class $j$, and $0$ otherwise. In practice, we estimate $\mathcal{O}_{i\to j}$ by sampling $N$ examples from class $i$, mapping them to the latent space with $f_{\theta_f}$, and computing the fraction that $g_{\theta_c}$ assigns to class $j$.

We evaluate $\mathcal{O}_{i\to j}$ for all ordered class pairs and select the pair with the maximal value. The chosen pair indicates the most error-prone direction for label confusion; a trigger is then designed to exploit this specific weakness. By targeting the pair with highest misclassification rate, the attack achieves consistent source→target misclassification while limiting collateral impact on overall model accuracy.




The computation of $\mathcal{O}_{i \rightarrow j}$ depends on the model parameters $\theta = \{\theta_f, \theta_c\}$, which correspond to the feature extractor $f_{\theta_f}$ and the classifier $f_{\theta_c}$, respectively. To obtain these parameters, we first construct a condensed dataset $\mathcal{S} = \{(x'_i, y'_i)\}_{i=1}^N$ from the original dataset $\mathcal{T} = \{(x_i, y_i)\}_{i=1}^M$, where $N \ll M$. The synthetic dataset $\mathcal{S}$ is generated by minimizing a distribution-matching objective over randomly initialized models, ensuring that training on $\mathcal{S}$ approximates the behavior of models trained on the full dataset $\mathcal{T}$:

\begin{equation}
\begin{aligned}
\mathcal{S}^* = \underset{\mathcal{S}}{\arg\min}\;
\mathbb{E}_{x \sim p_\mathcal{T},\, x' \sim p_{\mathcal{S}},\, \theta \sim p_\theta}
D\!\left(P_{\mathcal{T}}(x; \theta),\, P_{\mathcal{S}}(x'; \theta)\right)
+ \lambda\,\mathcal{R}(\mathcal{S}),
\end{aligned}
\end{equation}

where $P_{\mathcal{T}}(x;\theta)$ and $P_{\mathcal{S}}(x';\theta)$ denote the feature distributions induced by the original and condensed datasets, respectively. The distance measure $D(\cdot,\cdot)$, such as Maximum Mean Discrepancy (MMD), quantifies the discrepancy between these distributions. $\mathcal{R}(S)$ is a regularization term, and $\lambda$ balances the trade-off between distribution alignment and regularization.

After generating the condensed dataset $\mathcal{S}$, we train a surrogate model parameterized by $\theta = \{\theta_f, \theta_c\}$ using only $\mathcal{S}$. This surrogate serves as an efficient approximation of the downstream model’s decision behavior. Once trained, it is evaluated on the original dataset $\mathcal{T}$, and a normalized confusion matrix is computed to analyze inter-class prediction tendencies.
\begin{equation}
\begin{aligned}
C&=\frac{C_{ij}}{\sum_{j=0}^{o_c-1}C_{ij}}\\
C_{ij}&=\sum_{(x,y)\in\mathcal{T}}\mathbb{I}[y=i]\mathbb{I}[g_{\theta_c}(f_{\theta_f})(x)=j]
\end{aligned}
\end{equation}
where $o_c$ is the total number of classes in the original dataset $\mathcal{T}$. $C_{ij}$ represents the empirical probability that a sample from class $i$ is misclassified as class $j$. The maximum inter-class misclassification rate $\mathcal{O}_{y_s\xrightarrow{}y_\tau}$ is then calculated as follows:
\begin{equation}
    \mathcal{O}_{y_s\xrightarrow{}y_\tau}=\underset{i,j}{\arg\max C_{ij}},\quad i\neq j
\end{equation}
This measure identifies the class pair $(i,j)$ with the highest misclassification probability, revealing the most vulnerable decision boundary in the model. 

We then proceed to the trigger generation phase, where the objective is to create a trigger that, when added to an input sample, causes the model to misclassify the input from the source class $y_s$ to the target class $y_\tau$. Speicifically, we utilize a generator model $G_\phi$, which generates perturbations, or triggers, which are added to the original input data. The perturbation is designed to be imperceptible, ensuring the trigger remains stealthy while causing misclassification. The trigger generation process can be represented as follows:
\begin{equation}
    \label{eq:gen}
    \begin{aligned}
    &\widetilde{x}=x+\alpha G_\phi(x),\quad\forall x\in\mathcal{T}_{y_s}\\
    &s.t.\quad\|G_\phi(x)\|_\infty<\varepsilon,\quad\forall x
    \end{aligned}
\end{equation}

where $G_\phi(x)$ represents the generated adversarial noise, while $\varepsilon$ is a constraint that controls the maximum permissible perturbation, ensuring that the perturbation remains subtle and undetectable. The perturbed input is denoted as $\widetilde{x}$. The subset $\mathcal{T}_{y_s}$ refers to the portion of the original dataset for which the label is $y_s$. $\alpha$ is a small constant, further controlling the size of the perturbation.

In practice, the maximum permissible perturbation constraint in Eq.(\ref{eq:gen}) is enforced by applying a clamping operation to the generator output $G_\phi(x)$ before adding it to the original input. Specifically, the adversarial noise is clamped such that its $\ell_\infty$-norm lies within the range $[-\varepsilon, \varepsilon]$, ensuring the perturbation remains imperceptible. This clamped noise is then added to the clean image, followed by another clamping step to maintain the pixel values within the valid image range. The loss in Eq.(\ref{eq:G_loss}) is computed on these clamped, perturbed images, allowing the generator to be implicitly optimized under the perturbation constraint without the need for an explicit penalty term in the objective.

The generator model $G_\phi$ is trained alongside $\theta=\{\theta_f,\theta_c\}$, with the objective of minimizing the classification loss associated with the target class $y_\tau$. Specifically, the generator is updated based on the following objective function:
\begin{equation}
\label{eq:G_loss}
    \phi=\phi-\eta_\phi\sum_{x\in\mathcal{T}_{y_s}}\mathcal{L}\left(g_{\theta_c}(f_{\theta_f}(x+G_\phi(x))),y_\tau\right)
\end{equation}

where $\mathcal{L}$ is the loss function, which measures the error in predicting the target class $y_\tau$ after applying the trigger to the input $x$, and $\eta_\phi$ is the learning rate for the generator. 

By iteratively updating the generator, the generator $G_\phi$ is refined to produce more effective backdoor triggers. The process continues until the trigger causes consistent misclassifications of the source class $y_s$ as the target class $y_\tau$, while keeping the perturbation within the imperceptibility threshold $\varepsilon$. This approach enables the adversary to design highly effective backdoor triggers, leveraging the generator to produce stealthy perturbations that successfully compromise the performance of the downstream model.

(2) \textit{Backdoor Injection}

Once the generator $G_\phi$ has been trained to generate perturbations that cause misclassifications of the source class $y_s$ to the target class $y_\tau$, we proceed with the backdoor injection process. This step involves adding the learned perturbations to the source class samples in the original dataset $\mathcal{T}$. Specifically, we add the perturbations generated by $G_\phi$ to each sample $x \in \mathcal{T}_{y_s}$:
\begin{equation}
    \widetilde{x}=x+\alpha G_\phi(x)\quad\forall x\in\mathcal{T}_{y_s}
\end{equation}

where $\widetilde{x}$ represents the perturbed sample, and $G_\phi(x)$ is the perturbation generated by the adversarial generator. These perturbed samples are then relabeled to the target class $y_\tau$. 

This process ensures that adversarial perturbations are applied to the samples from the source class, resulting in a set of triggered samples, $\mathcal{T}_{\text{triggered}} = {(\widetilde{x}, y_\tau)}_{i=1}^{N_{\text{triggered}}}$, where the perturbed inputs are labeled as the target class $y_\tau$. In the subsequent step, the triggered samples are incorporated with the clean samples from the target class $y_\tau$. The primary objective of this combination is to introduce a fraction of the triggered samples into the target class, thereby facilitating the model to misclassify source class samples as the target class when subjected to the adversarial trigger. This process ensures that the model’s decision boundary is subtly manipulated to favor misclassification under specific conditions. Let $N_{\text{triggered}}$ be the total number of triggered samples generated in the previous step, each labeled with the target class $y_\tau$. The number of clean samples in the target class $y_\tau$ in the original dataset $\mathcal{T}{y_\tau}$ is denoted by $N_{\mathcal{T}{y_\tau}}$. Based on the poison ratio $\rho$, we will add $\rho \cdot N_{\mathcal{T}_{y_\tau}}$ triggered samples into $\mathcal{T}_{y_\tau}$. Specifically, we first randomly select $\rho \cdot N_{\mathcal{T}_{y_\tau}}$ samples from $\mathcal{T}_{\text{triggered}}$ and add them into $\mathcal{T}_{y_\tau}$. The resulting poisoned dataset $\mathcal{T}_{\text{mixed}}$ consists of both the clean target class samples and the triggered samples:
\begin{equation}
\mathcal{T}_\text{mixed}=\mathcal{T}_{y_\tau}\cup\{(\widetilde{x},y_\tau)\}_{i=1}^{\rho\cdot N_{\mathcal{T}_{y_\tau}}}
\end{equation}


The next step is to recondense the target class $\mathcal{T}_{y_\tau}$. The objective of recondensation is to generate a new subset $\mathcal{S}_{y_\tau}$ within the synthetic dataset, which preserves the key characteristics of the target class while amplifying the influence of the triggered samples. This process seeks to strike a balance between maintaining the intrinsic features of the target class and maximizing the impact of the adversarial samples. Specifically, the objective is to generate a synthetic dataset $\mathcal{S}{y_\tau}$ that closely approximates the target class distribution in the poisoned data $\mathcal{T}{y_\tau}$. The optimization objective is defined as:
\begin{equation}
\begin{aligned}
\mathcal{S}_{y_\tau}^*=\underset{\mathcal{S}_{y_\tau}}{\arg\min}\mathbb{E}_{x\sim p_{\mathcal{T}_\text{mixed}},x'\sim p_{\mathcal{S}_{y_\tau}},\theta\sim p_\theta}D\left(P_{\mathcal{T}_\text{mixed}}(x;\theta),P_{\mathcal{S}_{y_\tau}}(x';\theta)\right)+\lambda\mathcal{R}(\mathcal{S}_{y_\tau})
\end{aligned}
\end{equation}

where $P_{\mathcal{T}_\text{mixed}}(x;\theta)$ is the probability distribution of the target class incorporating triggered samples. $P_{\mathcal{S}_{y_\tau}}(x';\theta)$ is the probability distribution of the recondensed target class. 

\section{Stealthiness Analysis}
A critical challenge in designing effective backdoor attacks on dataset condensation is achieving stealthiness, ensuring that poisoned samples and the resulting synthetic data are indistinguishable from their clean counterparts. Our goal is to formalize stealthiness through a geometric and distributional lens, grounded in the feature space induced by deep neural architectures.

To this end, our analysis is guided by the following question: How does input-aware backdoor injection perturb the structure of data manifolds in feature space, and can this deviation be rigorously bounded to guarantee stealth? Since distribution matching-based condensation aligns global feature statistics (\textit{e.g.}, moments of embedded data), it is essential to understand whether triggers introduce detectable geometric or statistical anomalies in the condensed representation. We conduct our analysis in a Reproducing Kernel Hilbert Space (RKHS), where class-specific data, both clean and triggered, are assumed to lie on smooth, locally compact manifolds. By modeling the trigger as a bounded, input-aware perturbation and invoking assumptions on manifold regularity and inter-class proximity, we show that triggered samples remain tightly coupled to the clean data manifold under mild conditions. This theoretical framework enables us to quantify the effect of poisoning both at the feature level (Theorem~\ref{theorem:upper bound on feature-manifold}) and at the level of the condensed dataset (Theorem~\ref{theorem:MMD}). These results provide principled justification for \textsc{Sneakdoor}’s empirical stealth: the perturbations introduced by the trigger remain latent-space-aligned and distributionally consistent, limiting their detectability after condensation.

\textit{Formal statements of assumptions, intermediate lemmas, and proofs supporting our theoretical analysis are deferred to Appendix B for clarity and completeness.}

\begin{definition}[Kernel]
    $k:\mathcal{X}\times\mathcal{X}\mapsto\mathbb{R}$ on a non-empty set $\mathcal{X}$ is a kernel if it satisfies the following two conditions: (1) \textit{symmetry}: $k(x,x')=k(x',x),\quad\forall x,x'\in\mathcal{X}$. (2) Positive Semi-Definiteness: for any finite subset $\{x_1,x_2,\cdots,x_n\}\subset\mathcal{X}$, the Gram matrix $\mathbf{K}=[k(x_i,x_j)]_{i,j=1}^n$ is positive semi-definite.
\end{definition}

\begin{definition}[Reproducing Kernel Hilbert Space, RKHS]
    Given a kernel $k:\mathcal{X}\times\mathcal{X}\mapsto\mathbb{R}$, the Reproducing Kernel Hilbert Space $\mathcal{H}_k$ is a Hilbert space of functions $f:\mathcal{X}\mapsto\mathbb{R}$ satisfying: (1) For every $x\in\mathcal{X}$, the function $k(x,\cdot)\in\mathcal{H}_k$. (2) $\forall x\in\mathcal{X}$ and $f\in\mathcal{H}_k$, $f(x)=\langle f,k(x,\cdot)\rangle_{\mathcal{H}_k}$.
\end{definition}

\begin{theorem}[Upper Bound on Feature-Manifold Deviation under Poisoning]
Let \( \mathcal{T}_{y_\tau} \) denote the clean target-class dataset and \( \mathcal{T}_{\mathrm{triggered}} \) the triggered (poisoned) dataset, with corresponding feature-space distributions \( P_{\mathcal{M}_{\mathrm{clean}}} \) and \( P_{\mathcal{M}_{\mathrm{triggered}}} \), respectively. Define the mixed distribution as: $P_{\mathcal{M}_{\mathrm{mixed}}} = (1-\rho) P_{\mathcal{M}_{\mathrm{clean}}} + \rho P_{\mathcal{M}_{\mathrm{triggered}}}$, where \( \rho \in [0,1] \) denotes the poisoning ratio. Under Assumptions 1 (Lipschitz Continuity), 2 (Local Compactness of Feature Manifold), and 3 (Inter-Class Hausdorff Distance), the expected deviation of samples from the mixed distribution to the target feature manifold satisfies:
\begin{equation}
    \mathbb{E}_{z \sim P_{\mathcal{M}_{\mathrm{mixed}}}} \left[ \inf_{z_\tau \in \mathcal{M}_{\mathrm{clean}}} \|z - z_\tau\|_{\mathcal{H}} \right] \leq \rho (\gamma \varepsilon + \delta),
\end{equation}

where \( \mathcal{H} \) is the RKHS associated with the feature encoder.
\label{theorem:upper bound on feature-manifold}
\end{theorem}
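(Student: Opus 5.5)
The plan is to split the expectation over the mixture and bound each piece separately. Write $d(z) := \inf_{z_\tau \in \mathcal{M}_{\mathrm{clean}}} \|z - z_\tau\|_{\mathcal{H}}$ for the distance of a feature point to the clean target manifold. Since $P_{\mathcal{M}_{\mathrm{mixed}}}$ is a convex combination, expectation is linear in the measure, and $d \ge 0$ is measurable (indeed $1$-Lipschitz), we have
\begin{equation*}
\mathbb{E}_{z\sim P_{\mathcal{M}_{\mathrm{mixed}}}}[d(z)] = (1-\rho)\,\mathbb{E}_{z\sim P_{\mathcal{M}_{\mathrm{clean}}}}[d(z)] + \rho\,\mathbb{E}_{z\sim P_{\mathcal{M}_{\mathrm{triggered}}}}[d(z)].
\end{equation*}
The clean term vanishes. $P_{\mathcal{M}_{\mathrm{clean}}}$ is supported on $\mathcal{M}_{\mathrm{clean}}$, so $d(z)=0$ almost surely; Assumption 2 (local compactness, so the manifold is closed in $\mathcal{H}$) ensures the infimum is attained and equals zero on the support. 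It therefore suffices to show that $d(z) \le \gamma\varepsilon + \delta$ for every triggered feature $z$, after which we multiply by $\rho$.

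For a triggered feature, write $z = f(\widetilde{x})$ with $\widetilde{x} = x + \alpha G_\phi(x)$ and $x \in \mathcal{T}_{y_s}$. I insert the intermediate point $f(x)$ and use the triangle inequality in $\mathcal{H}$:
\begin{equation*}
d(f(\widetilde{x})) \le \|f(\widetilde{x}) - f(x)\|_{\mathcal{H}} + d(f(x)).
\end{equation*}
The first term is controlled by Assumption 1, Lipschitz continuity of the encoder with constant $\gamma$ into $\mathcal{H}$. This gives $\|f(\widetilde{x})-f(x)\|_{\mathcal{H}} \le \gamma\|\alpha G_\phi(x)\| \le \gamma\varepsilon$, using the clamping constraint $\|G_\phi(x)\|_\infty<\varepsilon$. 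If the Lipschitz constant is stated with respect to $\ell_\infty$ on inputs, nothing more is needed. If it is stated in $\ell_2$, we absorb the dimension factor and $\alpha \le 1$ into $\gamma$, taking $\alpha\le 1$ as the paper's "small constant". We also note that the final pixel clamp to the valid range is a projection onto a box, which is nonexpansive and can only shrink the perturbation.

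The second term, $d(f(x))$ for a clean source-class sample $x$, is exactly the source-to-target manifold proximity. Assumption 3 (inter-class Hausdorff distance) bounds the Hausdorff distance between $\mathcal{M}_{y_s}$ and $\mathcal{M}_{\mathrm{clean}} = \mathcal{M}_{y_\tau}$ by $\delta$. Hence $\sup_{z_s\in\mathcal{M}_{y_s}} d(z_s) \le \delta$. Combining the two terms gives the pointwise bound $\gamma\varepsilon+\delta$; integrating against $P_{\mathcal{M}_{\mathrm{triggered}}}$ and weighting by $\rho$ yields the claim.

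The main obstacle is bookkeeping between the assumptions and the objects in the theorem rather than any hard estimate. We must verify three things. First, the Hausdorff assumption is stated in the direction we need, namely source points being close to the target manifold, which is the case since Hausdorff distance is symmetric. Second, the norm in which $\varepsilon$ is measured matches the Lipschitz assumption, for which I would state the constant conversion explicitly. Third, the triggered set's features are pushforwards of source-class samples, so Assumption 3 applies to $f(x)$ for every $x$ in $\mathcal{T}_{y_s}$.
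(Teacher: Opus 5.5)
Your proposal is correct and matches the paper's proof: the same mixture split with the clean term vanishing, and the same pointwise bound $\gamma\varepsilon+\delta$ on triggered features, obtained from the triangle inequality through the source feature $f(x)$, the Lipschitz estimate of Lemma~1, and the directed Hausdorff distance $\delta$ of Assumption~3. The paper places that triangle inequality inside Lemma~2 via a diffeomorphism $\Psi$ that the bound never uses, so your direct route is if anything cleaner; note only that the paper defines $\gamma = L_f\alpha$ (so your $\alpha\le 1$ caveat is unnecessary), and that the clean term vanishes simply by taking $z_\tau = z$, not because Assumption~2 makes $\mathcal{M}_{\mathrm{clean}}$ closed, which it does not.
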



\begin{theorem}[Upper Bound on the Discrepancy Between Poisoned and Clean Condensation Datasets]
Let \( \mathcal{T}_{y_\tau} \) denote the clean target-class dataset and \( \mathcal{T}_{\mathrm{mixed}} = \mathcal{T}_{y_\tau} \cup \mathcal{T}_{\mathrm{triggered}} \), where \( \mathcal{T}_{\mathrm{triggered}} \) consists of source-class samples \( x \in \mathcal{T}_{y_s} \) perturbed by a trigger generator \( G_\phi \) and relabeled as the target class. Let \( \mathcal{S}_{\mathrm{clean}} \) and \( \mathcal{S}_{\mathrm{poison}} \) denote the condensation datasets distilled from \( \mathcal{T}_{y_\tau} \) and \( \mathcal{T}_{\mathrm{mixed}} \), respectively, by minimizing: $
    \mathcal{S}^* = \arg\min_{\mathcal{S}} \mathrm{MMD}(\mathcal{T}, \mathcal{S}) + \lambda \mathcal{R}(\mathcal{S})$, where \( \mathcal{T} \in \{ \mathcal{T}_{y_\tau}, \mathcal{T}_{\mathrm{mixed}} \} \), \( \lambda > 0 \), and \( \mathcal{R} \) is a $\mu_R$ strongly convex regularizer. Under Assumptions 1 (Lipschitz Continuity), 2 (Local Compactness of Feature Manifold), and 3 (Inter-Class Hausdorff Distance), the MMD between \( \mathcal{S}_{\mathrm{clean}} \) and \( \mathcal{S}_{\mathrm{poison}} \) satisfies:
\[
\mathrm{MMD}(\mathcal{S}_{\mathrm{clean}}, \mathcal{S}_{\mathrm{poison}}) \leq \frac{L_f^2 \rho (\gamma \varepsilon + \delta)}{\lambda \mu_R}
\]
where \( \gamma = L_f \alpha \), \( \delta = \sup_{z_s \in \mathcal{M}_{\mathrm{source}}} \inf_{z_\tau \in \mathcal{M}_{\mathrm{clean}}} \|z_s - z_\tau\|_{\mathcal{H}} \), \( \rho \) is the poisoning rate, and \( \varepsilon \) bounds the input perturbation.
\label{theorem:MMD}
\end{theorem}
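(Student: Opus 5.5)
We read the condensation map as the minimizer of a strongly convex objective. Then Theorem~\ref{theorem:upper bound on feature-manifold} feeds in through a perturbation (stability) argument of the kind used for regularized M-estimators. Write $F_{\mathcal{T}}(\mathcal{S}) = \mathrm{MMD}(\mathcal{T},\mathcal{S}) + \lambda\mathcal{R}(\mathcal{S})$. To make the MMD term tractable we work with its squared mean-embedding form in the RKHS, $\|\mu_{\mathcal{T}} - \mu_{\mathcal{S}}\|_{\mathcal{H}}^2$, where $\mu_{\mathcal{S}} = \frac{1}{|\mathcal{S}|}\sum_j \psi(s_j)$ and $\psi$ is the feature encoder. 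We treat this term as convex in $\mathcal{S}$ (in the linearized feature sense), so $F_{\mathcal{T}}$ is $\lambda\mu_R$-strongly convex. Hence $\mathcal{S}_{\mathrm{clean}}$ and $\mathcal{S}_{\mathrm{poison}}$ are unique minimizers of $F_{\mathcal{T}_{y_\tau}}$ and $F_{\mathcal{T}_{\mathrm{mixed}}}$ respectively.

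\textbf{Step 1: stability inequality.} Add the two strong-convexity inequalities at the two minimizers:
\[
\lambda\mu_R\|\mathcal{S}_{\mathrm{clean}}-\mathcal{S}_{\mathrm{poison}}\|^2 \le \big[F_{\mathcal{T}_{\mathrm{mixed}}}-F_{\mathcal{T}_{y_\tau}}\big](\mathcal{S}_{\mathrm{clean}}) - \big[F_{\mathcal{T}_{\mathrm{mixed}}}-F_{\mathcal{T}_{y_\tau}}\big](\mathcal{S}_{\mathrm{poison}}).
\]
The regularizer cancels in the difference, leaving only the data-dependent part of MMD. Expanding the squared embedding distance, the $\|\mu_{\mathcal{S}}\|^2$ terms also cancel. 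What remains is the cross term $2\langle \mu_{\mathcal{T}_{y_\tau}}-\mu_{\mathcal{T}_{\mathrm{mixed}}},\, \mu_{\mathcal{S}_{\mathrm{clean}}}-\mu_{\mathcal{S}_{\mathrm{poison}}}\rangle_{\mathcal{H}}$. Cauchy--Schwarz together with the $L_f$-Lipschitz property of the encoder (Assumption 1), $\|\mu_{\mathcal{S}}-\mu_{\mathcal{S}'}\|_{\mathcal{H}}\le L_f\|\mathcal{S}-\mathcal{S}'\|$, bounds this by $2L_f\|\mu_{\mathcal{T}_{y_\tau}}-\mu_{\mathcal{T}_{\mathrm{mixed}}}\|_{\mathcal{H}}\,\|\mathcal{S}_{\mathrm{clean}}-\mathcal{S}_{\mathrm{poison}}\|$. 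Dividing through gives
\[
\|\mathcal{S}_{\mathrm{clean}}-\mathcal{S}_{\mathrm{poison}}\| \lesssim \frac{L_f}{\lambda\mu_R}\,\|\mu_{\mathcal{T}_{y_\tau}}-\mu_{\mathcal{T}_{\mathrm{mixed}}}\|_{\mathcal{H}}.
\]
Applying Lipschitz continuity once more converts the left side into $\mathrm{MMD}(\mathcal{S}_{\mathrm{clean}},\mathcal{S}_{\mathrm{poison}})$ and produces the factor $L_f^2$. The constant $2$ we absorb, or remove by using the unsquared MMD convention.

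\textbf{Step 2: bounding the source perturbation.} Since $P_{\mathcal{M}_{\mathrm{mixed}}} = (1-\rho)P_{\mathcal{M}_{\mathrm{clean}}}+\rho P_{\mathcal{M}_{\mathrm{triggered}}}$, we have
\[
\mu_{\mathcal{T}_{\mathrm{mixed}}}-\mu_{\mathcal{T}_{y_\tau}}=\rho(\mu_{\mathrm{trig}}-\mu_{\mathrm{clean}}).
\]
For each triggered point $\psi(x+\alpha G_\phi(x))$, Assumption 1 and $\|G_\phi\|_\infty<\varepsilon$ give distance at most $\gamma\varepsilon$ (with $\gamma=L_f\alpha$) from $\psi(x)\in\mathcal{M}_{\mathrm{source}}$. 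Assumption 3 then gives distance at most $\delta$ from that point to $\mathcal{M}_{\mathrm{clean}}$. Averaging is exactly the computation in Theorem~\ref{theorem:upper bound on feature-manifold}, and yields a per-sample deviation of at most $\rho(\gamma\varepsilon+\delta)$.

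\textbf{Main obstacle.} Theorem~\ref{theorem:upper bound on feature-manifold} controls the distance of the mixture to the manifold $\mathcal{M}_{\mathrm{clean}}$, not to its mean $\mu_{\mathrm{clean}}$. These differ: a point can lie on the manifold yet far from the mean. To bridge the gap, we will use Assumption 2 (local compactness) to match each triggered feature to its nearest clean point $z_\tau$, defining a coupling between $P_{\mathcal{M}_{\mathrm{triggered}}}$ and a reweighting of $P_{\mathcal{M}_{\mathrm{clean}}}$. We then bound the embedding difference by the transport cost of that coupling, at most $\gamma\varepsilon+\delta$, since in an RKHS the embedding is 1-Lipschitz with respect to the Wasserstein-1 distance.

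The reweighting is the delicate point. If the nearest clean points are not distributed like $P_{\mathcal{M}_{\mathrm{clean}}}$, an extra term appears, and we will either assume it is absorbed into $\delta$ or interpret $\mathcal{M}_{\mathrm{clean}}$-proximity distributionally. The convexity of MMD in $\mathcal{S}$ is likewise an idealization that we will state explicitly.
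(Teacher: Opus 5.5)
Your outline matches the paper's proof. It has three pieces: the mixture identity $\mu_{\mathrm{mixed}}-\mu_{\mathrm{clean}}=\rho(\mu_{\mathrm{triggered}}-\mu_{\mathrm{clean}})$, strong-convexity stability of the regularized minimizer, and a final Lipschitz step giving $L_f^2$. Your Step 1 is more careful than the paper's. The paper bounds $\|\nabla_{\mathcal S}\mathrm{MMD}(\mathcal T_{y_\tau},\mathcal S_{\mathrm{clean}})-\nabla_{\mathcal S}\mathrm{MMD}(\mathcal T_{\mathrm{mixed}},\mathcal S_{\mathrm{poison}})\|$ by $L_f\,\mathrm{MMD}(\mathcal T_{y_\tau},\mathcal T_{\mathrm{mixed}})$, although the two gradients sit at different points. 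That step is only justified through the convexity of the MMD term in $\mathcal S$ that you state openly. This convexity is an extra hypothesis, since a nonlinear Lipschitz encoder generally breaks it.

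The gap you flag is genuine, and Assumptions 1--3 cannot close it. Step 2 needs $\|\mu_{\mathrm{triggered}}-\mu_{\mathrm{clean}}\|_{\mathcal H}\le\gamma\varepsilon+\delta$. Assumption 3 is a one-sided Hausdorff bound: it puts each source feature near \emph{some} point of $\mathcal M_{\mathrm{clean}}$, but says nothing about where $P_{\mathcal M_{\mathrm{clean}}}$ puts its mass. Your coupling gives only $\|\mu_{\mathrm{triggered}}-\mu_\pi\|_{\mathcal H}\le\gamma\varepsilon+\delta$, with $\pi$ the push-forward under the nearest-point map. The remainder $\|\mu_\pi-\mu_{\mathrm{clean}}\|_{\mathcal H}$ can be as large as the diameter of $\mathcal M_{\mathrm{clean}}$. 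Since $\delta$ is fixed by its definition, it cannot absorb this term.

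Here is a counterexample to the stated inequality. Take $\mathcal H=\mathbb R$ and $f_{\theta_f}$ the identity ($L_f=1$). Let clean features be uniform on $(0,K)$ and source features uniform on $(K-1,K)$, so $\delta=0$, and let the trigger have arbitrarily small $\varepsilon$. Then $\mathrm{MMD}(\mathcal T_{y_\tau},\mathcal T_{\mathrm{mixed}})\approx\rho(K-1)/2$, against a claimed $\rho\alpha\varepsilon$. Take a one-point $\mathcal S$ and $\mathcal R(s)=\tfrac{\mu_R}{2}s^2$. Then $(\mu_{\mathcal T}-s)^2+\lambda\mathcal R(s)$ is minimized at $s^*=2\mu_{\mathcal T}/(2+\lambda\mu_R)$. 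So $\mathrm{MMD}(\mathcal S_{\mathrm{clean}},\mathcal S_{\mathrm{poison}})$ grows linearly in $K$, while the theorem's right-hand side stays fixed.

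The paper's proof has the same hole. It passes from Theorem 1 directly to $\rho\|\mu_{\mathrm{clean}}-\mu_{\mathrm{triggered}}\|_{\mathcal H}\le\rho(\gamma\varepsilon+\delta)$. The missing ingredient is a distributional hypothesis, for example $\|\mu_{\mathrm{source}}-\mu_{\mathrm{clean}}\|_{\mathcal H}\le\delta$, which follows from a $W_1$ bound of $\delta$ between the source and clean feature distributions. With it, averaging Lemma 1 over the source class gives the needed mean bound.

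Separately, you cannot absorb the factor $2$ into an inequality with an explicit constant. Switching to the unsquared MMD does not remove it either: the $\|\mu_{\mathcal S}\|^2$ cancellation is lost. In the same scalar model, squared MMD gives $|s_{\mathrm{clean}}-s_{\mathrm{poison}}|=\tfrac{2}{2+\lambda\mu_R}|\mu_{\mathrm{clean}}-\mu_{\mathrm{mixed}}|$. This exceeds $|\mu_{\mathrm{clean}}-\mu_{\mathrm{mixed}}|/(\lambda\mu_R)$ once $\lambda\mu_R>2$. Unsquared MMD gives $s^*=\mu_{\mathcal T}$ whenever $\lambda\mu_R|\mu_{\mathcal T}|\le 1$, which exceeds the same bound once $\lambda\mu_R>1$. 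So even after the repair, and with convexity assumed, your route yields $2L_f^2\rho(\gamma\varepsilon+\delta)/(\lambda\mu_R)$ for a squared-MMD objective, not the stated constant.
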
 

\section{Experiments}

\paragraph{Datasets and Networks.}
We evaluate \textsc{Sneakdoor} across five standard datasets: FMNIST~\cite{xiao2017fashion}, CIFAR-10~\cite{krizhevsky2009learning}, SVHN~\cite{netzer2011reading}, Tiny-ImageNet~\cite{le2015tiny}, STL-10~\cite{coates2011analysis}, and ImageNette~\cite{howard2020fastai}. These datasets span a diverse range of visual complexity, semantic granularity, and image resolution, enabling a comprehensive evaluation of attack generality. Each dataset is processed according to the standard dataset condensation protocol, with 50 images per class used for condensation. Specifically, we adopt two common synthetic data backbones: {ConvNet} and {AlexNetBN}~\cite{NIPS2012_c399862d}, which represent lightweight and moderately expressive condensation encoders. For downstream training and evaluation, we consider four architectures: ConvNet, AlexNetBN, VGG11~\cite{simonyan2014very}, and ResNet18~\cite{he2016deep}. Moreover, we evaluate \textsc{Sneakdoor} in comparison with four state-of-the-art attacks: {NAIVE}~\cite{liu2023backdoor}, {DOORPING}~\cite{liu2023backdoor}, {SIMPLE}~\cite{chung2024rethinking}, and {RELAX}~\cite{chung2024rethinking}. 


\paragraph{Evaluation Metrics.}
We evaluate attack performance across three key dimensions: {ASR}, {CTA}, and {STE}. Following prior work~\cite{nips24waveattack}, {STE} is quantified using three complementary metrics: (1) PSNR (Peak Signal-to-Noise Ratio), measuring pixel-level similarity between triggered and clean samples, where higher values indicate lower perceptual distortion. (2) SSIM (Structural Similarity Index), which measures structural similarity, with values closer to 1 indicating stronger visual alignment; and (3) IS (Inception Score) quantifies the KL divergence between the predicted label distribution of a sample and the marginal distribution over all samples. Lower IS values suggest reduced recognizability, indicating higher stealth and improved resistance to detection. For convenience, we define an inverted score $\text{IS}^{\dagger} = (10^{-3} - \text{IS})\text{e}^{-4}$, where larger values correspond to improved stealth.

\paragraph{Overall Attack Effectiveness.}

We first evaluate the overall effectiveness of each backdoor attack in balancing three key objectives: {ASR}, {CTA}, and {STE}. To illustrate this trade-off, we visualize the normalized performance of each method using radar plots (Figure~\ref{fig:radar_balance_STL10}, Figure~\ref{fig:radar_balance_Tiny}) that jointly capture all three dimensions. \textsc{Sneakdoor} consistently achieves a superior balance across the three criteria. In contrast, while Doorping and Relax achieve high ASR, they suffer from significant degradation in either CTA or STE. Conversely, Naive and Simple maintain better CTA but fail to deliver competitive ASR or STE. These results validate our central hypothesis: \emph{input-aware trigger design combined with distribution-aligned injection enables the attack that is both effective and stealthy}. 


\begin{figure}[h]
    \centering
    \includegraphics[width=0.9\textwidth]{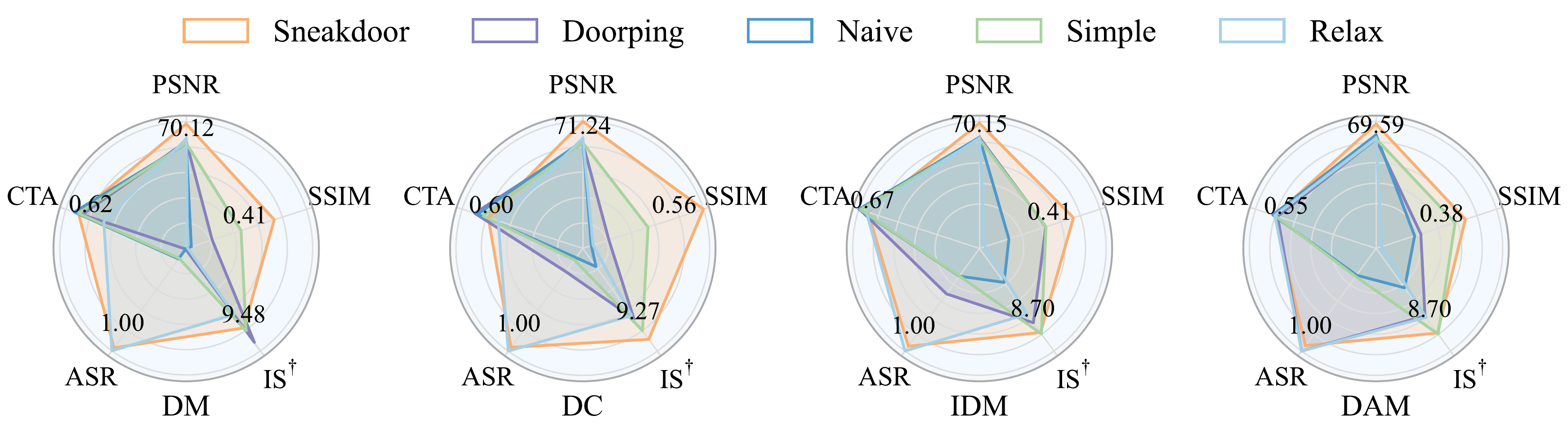}
    \caption{Attack Performance on STL10. Larger area indicates better balance.}
    \label{fig:radar_balance_STL10}
\end{figure}
\begingroup
\begin{figure}[h]
    \centering
    \includegraphics[width=0.9\textwidth]{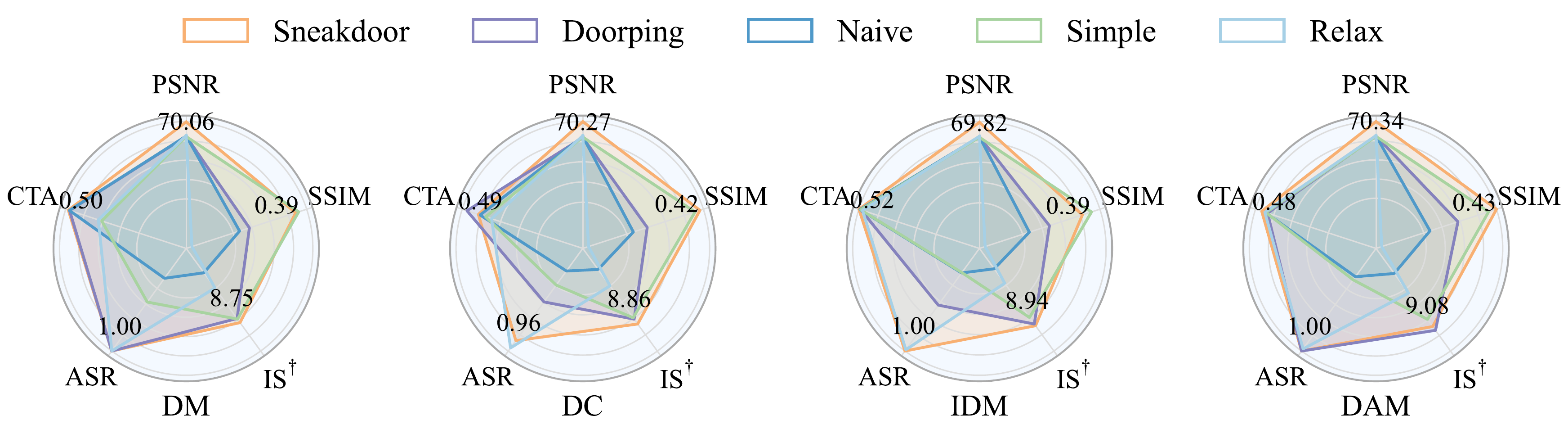}
    \caption{Attack Performance on Tiny-ImageNet. Larger area indicates better balance.}
    \label{fig:radar_balance_Tiny}
\end{figure}
\endgroup
\paragraph{Effectiveness on Different Datasets}

To rigorously assess the effectiveness of \textsc{sneakdoor}, we evaluate CTA and ASR across five datasets and four dataset condensation baselines: DM~\cite{zhao2023dm}, DC~\cite{zhao2020dataset}, IDM~\cite{zhao2023improved}, and DAM~\cite{sajedi2023datadam}. Results are summarized in Table~\ref{tab:cta_asr_on_convnet}, with each entry reporting the mean and standard deviation over five random seeds. \textsc{Sneakdoor} consistently achieves high ASR across all datasets and condensation methods, while maintaining competitive CTA. These results highlight the robustness and generalizability of \textsc{Sneakdoor}, with improvements most evident in scenarios where baseline methods overfit to specific condensation schemes.


\begingroup
\setlength{\tabcolsep}{3pt} 
\begin{table*}[htbp]
\centering
\caption{Effectiveness on Different Datasets}
\begin{adjustbox}{width=\textwidth}
{
\begin{tabular}{cc*{4}{c>{\columncolor{gray!25}}c}}
\toprule
\multirow{2}{*}{\textbf{Dataset}} & \multirow{2}{*}{\textbf{Method}} 
  & \multicolumn{2}{c}{\textbf{\textsc{Sneakdoor}}} 
  & \multicolumn{2}{c}{\textbf{DOORPING}} 
  & \multicolumn{2}{c}{\textbf{SIMPLE}} 
  & \multicolumn{2}{c}{\textbf{RELAX}} \\
& & CTA & ASR & CTA & ASR & CTA & ASR & CTA & ASR \\
\midrule
\multirow{4}{*}{\textbf{CIFAR10}}
& DM  & $0.626\pm0.001$ & $0.989\pm0.000$ 
       & $0.621\pm0.001$ & $0.988\pm0.005$ 
       & $0.584\pm0.000$ & $0.590\pm0.012$ 
       & $0.574\pm0.000$ & $1.000\pm0.000$ \\
& DC  & $0.537\pm0.000$ & $0.996\pm0.000$ 
       & $0.566\pm0.001$ & $1.000\pm0.000$  
       & $0.497\pm0.001$ & $0.657\pm0.021$ 
       & $0.511\pm0.001$ & $1.000\pm0.000$ \\
& IDM & $0.643\pm0.002$ & $0.975\pm0.001$ 
       & $0.654\pm0.002$ & $0.165\pm0.007$ 
       & $0.652\pm0.001$ & $0.142\pm0.008$ 
       & $0.653\pm0.002$ & $0.522\pm0.021$ \\
& DAM & $0.591\pm0.001$ & $0.979\pm0.001$ 
       & $0.531\pm0.001$ & $1.000\pm0.000$  
       & $0.537\pm0.001$ & $0.674\pm0.032$ 
       & $0.559\pm0.001$ & $1.000\pm0.001$ \\
\midrule
\multirow{4}{*}{\textbf{STL10}}
& DM  & $0.598\pm0.001$ & $0.973\pm0.000$ 
       & $0.577\pm0.001$ & $0.149\pm0.007$ 
       & $0.597\pm0.001$ & $0.096\pm0.009$ 
       & $0.596\pm0.001$ & $1.000\pm0.001$ \\
& DC  & $0.565\pm0.001$ & $0.998\pm0.001$ 
       & $0.598\pm0.001$ & $0.227\pm0.011$ 
       & $0.550\pm0.001$ & $0.112\pm0.011$ 
       & $0.563\pm0.000$ & $0.998\pm0.001$ \\
& IDM & $0.658\pm0.001$ & $0.979\pm0.001$ 
       & $0.661\pm0.001$ & $0.314\pm0.015$ 
       & $0.658\pm0.001$ & $0.100\pm0.007$ 
       & $0.658\pm0.001$ & $0.954\pm0.011$ \\
& DAM & $0.532\pm0.001$ & $0.992\pm0.001$ 
       & $0.533\pm0.001$ & $1.000\pm0.000$  
       & $0.535\pm0.001$ & $0.103\pm0.004$ 
       & $0.535\pm0.001$ & $1.000\pm0.000$ \\
\midrule
\multirow{4}{*}{\textbf{FMNIST}}
& DM  & $0.876\pm0.001$ & $0.998\pm0.000$ 
       & $0.876\pm0.000$ & $0.093\pm0.006$ 
       & $0.868\pm0.000$ & $0.178\pm0.005$ 
       & $0.828\pm0.000$ & $1.000\pm0.000$ \\
& DC  & $0.851\pm0.001$ & $0.998\pm0.000$ 
       & $0.872\pm0.001$ & $1.000\pm0.000$  
       & $0.837\pm0.001$ & $0.277\pm0.014$ 
       & $0.824\pm0.001$ & $1.000\pm0.000$ \\
& IDM & $0.877\pm0.001$ & $1.000\pm0.000$ 
       & $0.884\pm0.000$ & $0.998\pm0.002$ 
       & $0.879\pm0.000$ & $0.159\pm0.007$ 
       & $0.875\pm0.001$ & $1.000\pm0.000$ \\
& DAM & $0.877\pm0.000$ & $0.996\pm0.000$ 
       & $0.813\pm0.001$ & $1.000\pm0.000$  
       & $0.880\pm0.000$ & $0.151\pm0.012$ 
       & $0.874\pm0.000$ & $1.000\pm0.000$ \\
\midrule
\multirow{4}{*}{\textbf{SVHN}}
& DM  & $0.800\pm0.000$ & $1.000\pm0.000$ 
       & $0.780\pm0.001$ & $1.000\pm0.001$ 
       & $0.748\pm0.000$ & $0.110\pm0.007$ 
       & $0.747\pm0.000$ & $1.000\pm0.000$ \\
& DC  & $0.687\pm0.000$ & $1.000\pm0.000$ 
       & $0.583\pm0.001$ & $0.703\pm0.017$ 
       & $0.636\pm0.001$ & $0.100\pm0.009$ 
       & $0.689\pm0.001$ & $1.000\pm0.000$ \\
& IDM & $0.831\pm0.001$ & $0.986\pm0.001$ 
       & $0.839\pm0.001$ & $0.061\pm0.006$ 
       & $0.842\pm0.001$ & $0.114\pm0.008$ 
       & $0.834\pm0.002$ & $0.992\pm0.003$ \\
& DAM & $0.782\pm0.001$ & $1.000\pm0.000$ 
       & $0.721\pm0.000$ & $1.000\pm0.000$ 
       & $0.759\pm0.001$ & $0.114\pm0.005$ 
       & $0.745\pm0.001$ & $1.000\pm0.000$ \\
\midrule
\multirow{4}{*}{\shortstack{\textbf{TINY}\\\textbf{IMAGENET}}}
& DM  & $0.503\pm0.001$ & $1.000\pm0.000$ 
       & $0.496\pm0.002$ & $1.000\pm0.000$ 
       & $0.493\pm0.003$ & $0.100\pm0.004$ 
       & $0.494\pm0.003$ & $0.996\pm0.000$ \\
& DC  & $0.432\pm0.002$ & $1.000\pm0.000$ 
       & $0.492\pm0.001$ & $0.398\pm0.005$ 
       & $0.391\pm0.002$ & $0.192\pm0.006$ 
       & $0.418\pm0.003$ & $0.952\pm0.001$ \\
& IDM & $0.517\pm0.004$ & $1.000\pm0.000$ 
       & $0.512\pm0.005$ & $0.089\pm0.013$ 
       & $0.509\pm0.003$ & $0.046\pm0.002$ 
       & $0.484\pm0.006$ & $0.941\pm0.002$ \\
& DAM & $0.482\pm0.003$ & $1.000\pm0.000$ 
       & $0.449\pm0.003$ & $1.000\pm0.000$ 
       & $0.458\pm0.003$ & $0.082\pm0.002$ 
       & $0.465\pm0.002$ & $0.973\pm0.001$ \\
\bottomrule
\end{tabular}
}
\end{adjustbox}
\label{tab:cta_asr_on_convnet}
\end{table*}
\endgroup

\paragraph{Effectiveness on Cross Architectures}
To evaluate \textsc{Sneakdoor} in cross-architecture settings, where the condensation model differs from the downstream model, we follow prior work~\cite{liu2023backdoor} and consider four architectures: ConvNet, AlexNetBN, VGG11, and ResNet18. Specifically, we use ConvNet or AlexNetBN for data condensation and the remaining models for downstream training. 

As shown in Table~\ref{tab:cross_architecture_cta_asr}, we evaluate \textsc{Sneakdoor}. across 36 cross-architecture scenarios spanning various datasets, condensation methods, and downstream models. \textsc{Sneakdoor} demonstrates consistent performance across most architecture pairs, indicating strong transferability. However, when using the DC algorithm, performance systematically degrades on specific architectures. Prior studies, as well as our own findings, suggest that DC often produces lower-quality distilled datasets, as reflected in its relatively low CTA. This implies that the reduced ASR in these cases is more likely due to {\textit{ DC's limited ability to retain both task-relevant and backdoor-relevant information, rather than a shortcoming of the attack mechanism itself}}. When excluding DC-based cases, 27 scenarios remain, of which only 6 exhibit ASR below 90\%. This demonstrates that \textsc{Sneakdoor} consistently achieves high ASR in most settings, provided the underlying condensed data is of sufficient quality.


\begingroup
\setlength{\tabcolsep}{3pt}
\begin{table*}[ht]
\centering
\caption{Cross‐architecture CTA and ASR}
\begin{adjustbox}{width=\textwidth}
{
\begin{tabular}{cc*{4}{c>{\columncolor{gray!25}}c}}
\toprule
\multirow{2}{*}{\textbf{Dataset}} & \multirow{2}{*}{\textbf{Network}}
  & \multicolumn{2}{c}{\textbf{DM}}
  & \multicolumn{2}{c}{\textbf{DC}}
  & \multicolumn{2}{c}{\textbf{IDM}}
  & \multicolumn{2}{c}{\textbf{DAM}} \\
 & & CTA & ASR & CTA & ASR & CTA & ASR & CTA & ASR \\
\midrule
\multirow{3}{*}{\textbf{CIFAR10}}
 & VGG11     & $0.568\pm0.000$ & $0.971\pm0.000$ 
             & $0.472\pm0.000$ & $0.865\pm0.000$ 
             & $0.645\pm0.000$ & $0.719\pm0.008$ 
             & $0.539\pm0.000$ & $0.929\pm0.001$ \\
 & AlexNetBN & $0.616\pm0.001$ & $0.942\pm0.002$ 
             & $0.426\pm0.004$ & $0.000\pm0.000$ 
             & $0.689\pm0.002$ & $0.539\pm0.003$ 
             & $0.623\pm0.001$ & $0.902\pm0.004$ \\
 & ResNet18  & $0.548\pm0.001$ & $0.959\pm0.000$ 
             & $0.435\pm0.001$ & $0.534\pm0.003$ 
             & $0.656\pm0.001$ & $0.766\pm0.003$ 
             & $0.510\pm0.001$ & $0.857\pm0.002$ \\
\midrule
\multirow{3}{*}{\textbf{STL10}}
 & VGG11     & $0.587\pm0.001$ & $0.999\pm0.001$ 
             & $0.564\pm0.000$ & $0.790\pm0.003$ 
             & $0.676\pm0.001$ & $0.900\pm0.001$ 
             & $0.582\pm0.000$ & $0.924\pm0.001$ \\
 & AlexNetBN & $0.589\pm0.002$ & $0.905\pm0.005$ 
             & $0.542\pm0.001$ & $0.796\pm0.002$ 
             & $0.670\pm0.003$ & $0.798\pm0.005$ 
             & $0.636\pm0.001$ & $0.981\pm0.001$ \\
 & ResNet18  & $0.463\pm0.001$ & $0.989\pm0.000$ 
             & $0.396\pm0.001$ & $0.783\pm0.003$ 
             & $0.647\pm0.001$ & $0.949\pm0.001$ 
             & $0.436\pm0.001$ & $0.941\pm0.002$ \\
\midrule
\multirow{3}{*}{\shortstack{\textbf{TINY}\\\textbf{IMAGENET}}}
 & VGG11     & $0.488\pm0.001$ & $1.000\pm0.000$ 
             & $0.384\pm0.001$ & $1.000\pm0.000$ 
             & $0.541\pm0.002$ & $1.000\pm0.000$ 
             & $0.449\pm0.002$ & $1.000\pm0.000$ \\
 & AlexNetBN & $0.517\pm0.003$ & $0.796\pm0.015$ 
             & $0.292\pm0.007$ & $0.704\pm0.008$ 
             & $0.572\pm0.004$ & $1.000\pm0.000$ 
             & $0.541\pm0.003$ & $1.000\pm0.000$ \\
 & ResNet18  & $0.456\pm0.002$ & $1.000\pm0.000$ 
             & $0.358\pm0.001$ & $0.524\pm0.008$ 
             & $0.483\pm0.005$ & $0.988\pm0.010$ 
             & $0.438\pm0.002$ & $1.000\pm0.000$ \\
\bottomrule
\end{tabular}
}
\end{adjustbox}
\label{tab:cross_architecture_cta_asr}
\end{table*}
\endgroup

\paragraph{Evaluation of Stealthiness}

As shown in Figure~\ref{fig:bar_stl10}, \textsc{Sneakdoor} consistently achieves the highest PSNR and SSIM across all condensation methods, highlighting its ability to produce visually and structurally imperceptible triggers. In contrast, the other methods exhibit notable declines in both metrics, suggesting visible artifacts or structural distortions in the perturbed samples. Moreover, while Simple and Naive achieve slightly lower IS values, they fail to maintain competitive ASR or CTA, limiting their overall effectiveness. \textsc{Sneakdoor} achieves a similarly low IS while preserving high ASR, indicating enhanced stealth without sacrificing attack strength.


\begingroup
\begin{figure}[h]
    \centering
    \includegraphics[width=0.9\textwidth]{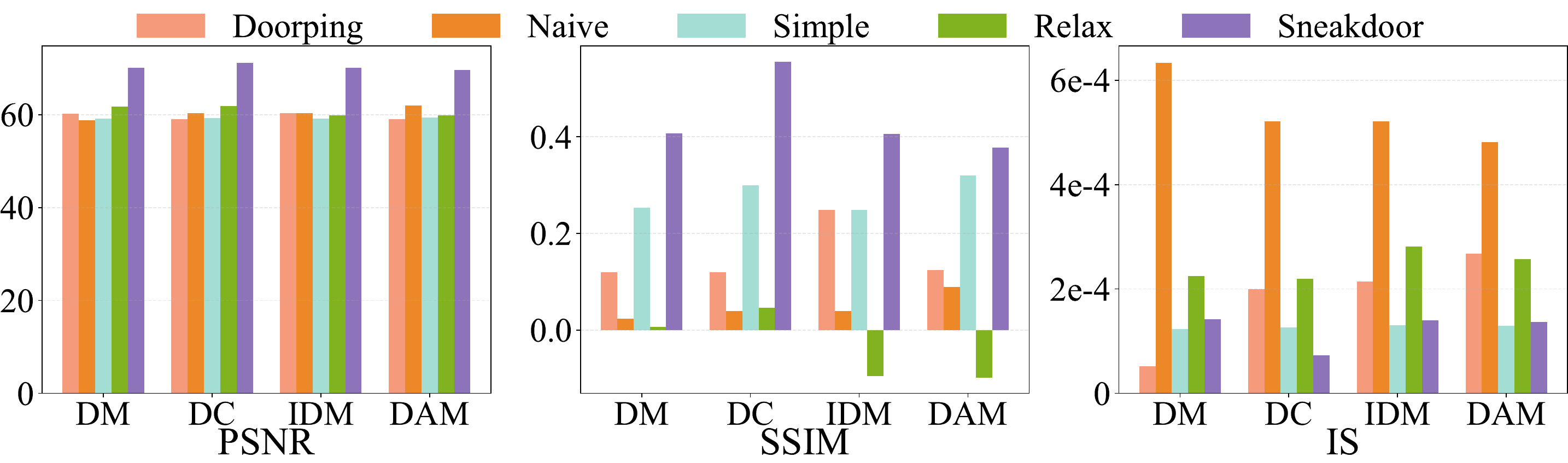}
    \caption{Stealthiness Performance on STL10}
    \label{fig:bar_stl10}
\end{figure}
\endgroup
\paragraph{Robust to Defense}
To evaluate the resilience of \textsc{Sneakdoor} against existing defense mechanisms, we conduct comprehensive experiments spanning model-level, input-level, and dataset-level defenses. Results in Table~\ref{tab:combined_nc_reasr_pixel} show that \textsc{Sneakdoor} consistently evades state-of-the-art model-level defenses such as NC~\cite{nc} and PIXEL~\cite{taog2022better}, with all anomaly scores remaining below detection thresholds. Input-level defenses also fail to recover effective triggers, as indicated by uniformly low REASR values across all settings~\cite{abs}. While dataset-level methods such as RNP~\cite{li2023reconstructive} and PDB~\cite{wei2024mitigating} succeed in suppressing ASR, they face significant drops in CTA, reflecting a sharp trade-off. These findings highlight \textsc{Sneakdoor} as a robust attack that remains effective under diverse defense conditions.


\begingroup
\setlength{\tabcolsep}{3pt}
\begin{table*}[htbp]
\centering
\caption{NC, ABS, and PIXEL across different datasets and condensation methods.}
\begin{adjustbox}{width=\textwidth}
\begin{tabular}{%
  c
  >{\columncolor{gray!25}}c c
  >{\columncolor{gray!25}}c c
  >{\columncolor{gray!25}}c c
  >{\columncolor{gray!25}}c c
  >{\columncolor{gray!25}}c c
  >{\columncolor{gray!25}}c c
}
\toprule
\multirow{2}{*}{\textbf{Dataset}}
  & \multicolumn{4}{c}{\textbf{NC Anomaly Index}}
  & \multicolumn{4}{c}{\textbf{ABS REASR}}
  & \multicolumn{4}{c}{\textbf{PIXEL}} \\
 & 
  \multicolumn{1}{>{\columncolor{white}}c}{\textbf{DM}} & 
  \multicolumn{1}{>{\columncolor{white}}c}{\textbf{DC}} & 
  \multicolumn{1}{>{\columncolor{white}}c}{\textbf{IDM}} & 
  \multicolumn{1}{>{\columncolor{white}}c}{\textbf{DAM}} &
  \multicolumn{1}{>{\columncolor{white}}c}{\textbf{DM}} & 
  \multicolumn{1}{>{\columncolor{white}}c}{\textbf{DC}} & 
  \multicolumn{1}{>{\columncolor{white}}c}{\textbf{IDM}} & 
  \multicolumn{1}{>{\columncolor{white}}c}{\textbf{DAM}} &
  \multicolumn{1}{>{\columncolor{white}}c}{\textbf{DM}} & 
  \multicolumn{1}{>{\columncolor{white}}c}{\textbf{DC}} & 
  \multicolumn{1}{>{\columncolor{white}}c}{\textbf{IDM}} & 
  \multicolumn{1}{>{\columncolor{white}}c}{\textbf{DAM}} \\
\midrule
STL10         & 1.3180 & 1.0872 & 1.3648 & 0.9843 & 0.19 & 0.19 & 0.25 & 0.17 & 1.5525 & 1.0515 & 0.7688 & 1.5425 \\
CIFAR10       & 1.8762 & 0.9518 & 1.7640 & 1.3787 & 0.24 & 0.35 & 0.29 & 0.57 & 1.7705 & 1.2625 & 1.7750 & 0.9472 \\
TINY-IMAGENET & 1.4706 & 1.6199 & 1.2201 & 1.9065 & 0.17 & 0.14 & 0.15 & 0.16 & 1.7813 & 1.4252 & 1.9528 & 1.3447 \\
\bottomrule
\end{tabular}
\end{adjustbox}
\label{tab:combined_nc_reasr_pixel}
\end{table*}
\endgroup

\begingroup

\begin{table}[!htbp]
\centering
\setlength{\tabcolsep}{13pt} 
\caption{Effects of (1) Class Pair Selection and (2) Input-Aware Trigger Generation}
\begin{adjustbox}{width=\textwidth,keepaspectratio}
\begin{tabular}{ccccccc}
\toprule
\textbf{(1)} & \textbf{(2)}   & \textbf{CTA}           & \textbf{ASR}           & \textbf{PSNR} & \textbf{SSIM} & \textbf{IS}               \\
\midrule
\XSolidBrush 
 & \Checkmark & $0.5912\pm0.0004$ & $0.9946\pm0.0005$ & 65.8677 & 0.12915 & $1.3058\times10^{-5}$ \\
\midrule
 \Checkmark & \XSolidBrush  & $0.6211\pm0.0005$ & $0.9876\pm0.0050$ & 59.8469 & 0.08217 & $2.2987\times10^{-4}$ \\
 \Checkmark & \Checkmark & $0.6262\pm0.0005$ & $0.9890\pm0.0000$ & 73.2285 & 0.66151 & $4.8441\times10^{-5}$ \\
\bottomrule
\end{tabular}
\end{adjustbox}
\label{tab:ablation}
\end{table}

\begingroup
\begin{table*}[!htbp]
\centering
\setlength{\tabcolsep}{12pt} 
\caption{CTA/ASR Before and After Defense}
\begin{adjustbox}{width=\textwidth,keepaspectratio}
\begin{tabular}{cccccc}
\toprule
\textbf{Dataset} & \textbf{Method} & \textbf{DM} & \textbf{DC} & \textbf{DAM} & \textbf{IDM} \\
\midrule

\multirow{3}{*}{CIFAR10}
 & W/O Defense & 0.6262/0.9890 & 0.5372/0.9960 & 0.5906/0.9794 & 0.6431/0.9754 \\
 & RNP    & 0.2334/0.5490 & 0.3874/0.1340 & 0.5748/0.9850 & 0.4424/0.2870 \\
 & PDB    & 0.1388/0.1380 & 0.1000/0.0000 & 0.0664/0.0300 & 0.3191/0.4190 \\
\midrule

\multirow{3}{*}{STL10}
 & W/O Defense & 0.5979/0.9725 & 0.5653/0.9975 & 0.5324/0.9918 & 0.6582/0.9790 \\
 & RNP    & 0.2791/0.0625 & 0.3955/0.8962 & 0.4961/0.8488 & 0.4889/0.5887 \\
 & PDB    & 0.4719/0.0425 & 0.1150/0.0100 & 0.1293/0.0313 & 0.2646/0.0038 \\
\midrule


\multirow{3}{*}{TINY-IMAGENET}
 & W/O Defense & 0.5026/1.0000 & 0.4318/1.0000 & 0.4822/1.0000 & 0.5174/1.0000 \\
 & RNP    & 0.2700/0.0600 & 0.2450/0.0200 & 0.3320/0.7600 & 0.3450/0.9200 \\
 & PDB    & 0.1030/0.0000 & 0.0570/0.0000 & 0.0540/0.0000 & 0.0800/0.1600 \\

\bottomrule
\end{tabular}
\end{adjustbox}
\label{tab:combined_pdb_rnp}
\end{table*}
\endgroup

\paragraph{Ablation study}
To assess the contribution of key components in \textsc{Sneakdoor}, we perform ablation studies on \textit{(1) inter-class boundary-based class pair selection and (2) input-aware trigger generation}. Removing (1) and using arbitrary class pairs slightly reduces ASR but significantly degrades CTA and stealth metrics (PSNR, SSIM). Replacing (2) with fixed patterns, as in Doorping, maintains ASR and CTA but severely compromises stealthiness, as shown by reduced similarity and elevated IS. These results underscore the necessity of both components.

Due to space limitations, we report supplementary results in Appendix C, including comparisons with additional attack baselines, analysis of varying the number of condensed samples per class, and evaluations using AlexNet as the condensation model.

\endgroup

\section{Limitations}

While \textsc{Sneakdoor} achieves a good balance across ASR, CTA, and STE, it does not consistently surpass all existing methods on any single metric. In certain cases, baseline approaches such as DOORPING attain higher ASR or CTA when considered in isolation. This trade-off reflects the inherent challenge of jointly optimizing multiple, often competing objectives. Future work could investigate methods that enhance a specific metric without sacrificing other metrics. Further refinement may lead to more adaptable backdoor attacks tailored to specific deployment or threat scenarios. Another limitation lies in the dependence on a relatively high poisoning ratio to reach optimal attack effectiveness. Reducing this requirement would make the approach more practical in real-world scenarios where the attacker’s control over data is limited. Finally, \textsc{Sneakdoor} does not fully capture more complex threat models that involve targeted source-to-target manipulations, such as altering ``Stop Sign'' to ``Speed Limit: 60 mph'', which poses serious safety risks. In such cases, the attack’s effectiveness may decrease. Extending \textsc{Sneakdoor} to handle diverse and task-specific attack objectives remains an important direction for future research.

\section{Conclusion}
This work introduces \textsc{Sneakdoor}, a novel attack paradigm that exposes critical vulnerabilities in distribution-matching–based dataset condensation methods. By integrating input-aware trigger generation with inter-class misclassification analysis, \textsc{Sneakdoor} injects imperceptible yet highly effective backdoors into synthetic datasets. The theoretical analysis in reproducing kernel Hilbert space (RKHS) formalizes the stealth properties of the attack, showing that the induced perturbations remain bounded in both geometric and distributional space. Extensive experiments across multiple datasets, condensation baselines, and defense strategies confirm that \textsc{Sneakdoor} achieves strong ASR–CTA–STE trade-offs and maintains high transferability under cross-architecture evaluation. Together, these results reveal that even condensed data, often regarded as a privacy-preserving substitute for raw data, can serve as a potent vector for model compromise when the condensation process is adversarially controlled. {\textit{This study lays the foundation for understanding the vulnerabilities and defense limitations of current condensation frameworks}}, emphasizing the need for proactive safeguards in synthetic data pipelines. 


\section*{Broader Impact}
Backdoor attacks against dataset condensation pose significant risks given the growing use of condensed datasets in privacy-sensitive or resource-constrained settings such as outsourced data compression, federated learning, machine unlearning, and continual learning. For instance, in continual learning systems deployed in edge AI applications, such as autonomous vehicles or medical diagnosis assistants, lightweight condensed datasets enable efficient model updates without full retraining. If an adversary injects imperceptible backdoor triggers into this data, the resulting models may misclassify critical inputs (\textit{e.g}., road signs or tumor types), leading to serious safety and ethical consequences. Given these risks, the responsible disclosure of such attacks is essential. The goal of our work is to expose vulnerabilities in distribution-matching-based condensation methods to inform the design of more effective defenses. To mitigate misuse, we recommend: (1) incorporating robust anomaly detection and certified defenses during condensation; (2) encouraging transparency and reproducibility in condensation pipelines; and (3) enforcing rigorous provenance tracking to dataset generation processes. Our findings serve both as a cautionary signal and a foundation for developing secure and resilient dataset condensation techniques.

\section*{Acknowledgments}
This work was supported by the National Key Research and Development Program of China (No.2023YFF0905300), NSFC Grant. NO.62176205 and 62472346.

{
\small
\bibliographystyle{unsrt} \bibliography{neurips}
}

\newpage
\section*{NeurIPS Paper Checklist}

The checklist is designed to encourage best practices for responsible machine learning research, addressing issues of reproducibility, transparency, research ethics, and societal impact. Do not remove the checklist: {\bf The papers not including the checklist will be desk rejected.} The checklist should follow the references and follow the (optional) supplemental material.  The checklist does NOT count towards the page
limit. 

Please read the checklist guidelines carefully for information on how to answer these questions. For each question in the checklist:
\begin{itemize}
    \item You should answer \answerYes{}, \answerNo{}, or \answerNA{}.
    \item \answerNA{} means either that the question is Not Applicable for that particular paper or the relevant information is Not Available.
    \item Please provide a short (1–2 sentence) justification right after your answer (even for NA). 
\end{itemize}

{\bf The checklist answers are an integral part of your paper submission.} They are visible to the reviewers, area chairs, senior area chairs, and ethics reviewers. You will be asked to also include it (after eventual revisions) with the final version of your paper, and its final version will be published with the paper.

The reviewers of your paper will be asked to use the checklist as one of the factors in their evaluation. While "\answerYes{}" is generally preferable to "\answerNo{}", it is perfectly acceptable to answer "\answerNo{}" provided a proper justification is given (e.g., "error bars are not reported because it would be too computationally expensive" or "we were unable to find the license for the dataset we used"). In general, answering "\answerNo{}" or "\answerNA{}" is not grounds for rejection. While the questions are phrased in a binary way, we acknowledge that the true answer is often more nuanced, so please just use your best judgment and write a justification to elaborate. All supporting evidence can appear either in the main paper or the supplemental material, provided in appendix. If you answer \answerYes{} to a question, in the justification please point to the section(s) where related material for the question can be found.

IMPORTANT, please:
\begin{itemize}
    \item {\bf Delete this instruction block, but keep the section heading ``NeurIPS Paper Checklist"},
    \item  {\bf Keep the checklist subsection headings, questions/answers and guidelines below.}
    \item {\bf Do not modify the questions and only use the provided macros for your answers}.
\end{itemize}


\begin{enumerate}

\item {\bf Claims}
    \item[] Question: Do the main claims made in the abstract and introduction accurately reflect the paper's contributions and scope?
    \item[] Answer: \answerYes{} 
    \item[] Justification: The abstract and introduction outline the motivation and detail the technical contributions of the proposed approach.
    \item[] Guidelines:
    \begin{itemize}
        \item The answer NA means that the abstract and introduction do not include the claims made in the paper.
        \item The abstract and/or introduction should clearly state the claims made, including the contributions made in the paper and important assumptions and limitations. A No or NA answer to this question will not be perceived well by the reviewers. 
        \item The claims made should match theoretical and experimental results, and reflect how much the results can be expected to generalize to other settings. 
        \item It is fine to include aspirational goals as motivation as long as it is clear that these goals are not attained by the paper. 
    \end{itemize}

\item {\bf Limitations}
    \item[] Question: Does the paper discuss the limitations of the work performed by the authors?
    \item[] Answer: \answerYes{} 
    \item[] Justification: While \textsc{Sneakdoor} achieves the best overall balance across Attack Success Rate (ASR), Clean Test Accuracy (CTA), and Stealthiness (STE), it does not consistently outperform existing methods on any single metric.
    \item[] Guidelines:
    \begin{itemize}
        \item The answer NA means that the paper has no limitation while the answer No means that the paper has limitations, but those are not discussed in the paper. 
        \item The authors are encouraged to create a separate "Limitations" section in their paper.
        \item The paper should point out any strong assumptions and how robust the results are to violations of these assumptions (e.g., independence assumptions, noiseless settings, model well-specification, asymptotic approximations only holding locally). The authors should reflect on how these assumptions might be violated in practice and what the implications would be.
        \item The authors should reflect on the scope of the claims made, e.g., if the approach was only tested on a few datasets or with a few runs. In general, empirical results often depend on implicit assumptions, which should be articulated.
        \item The authors should reflect on the factors that influence the performance of the approach. For example, a facial recognition algorithm may perform poorly when image resolution is low or images are taken in low lighting. Or a speech-to-text system might not be used reliably to provide closed captions for online lectures because it fails to handle technical jargon.
        \item The authors should discuss the computational efficiency of the proposed algorithms and how they scale with dataset size.
        \item If applicable, the authors should discuss possible limitations of their approach to address problems of privacy and fairness.
        \item While the authors might fear that complete honesty about limitations might be used by reviewers as grounds for rejection, a worse outcome might be that reviewers discover limitations that aren't acknowledged in the paper. The authors should use their best judgment and recognize that individual actions in favor of transparency play an important role in developing norms that preserve the integrity of the community. Reviewers will be specifically instructed to not penalize honesty concerning limitations.
    \end{itemize}

\item {\bf Theory assumptions and proofs}
    \item[] Question: For each theoretical result, does the paper provide the full set of assumptions and a complete (and correct) proof?
    \item[] Answer: \answerYes{} 
    \item[] Justification: Each theoretical result is provided the full set of assumptions and a complete (and correct) proof.
    \item[] Guidelines:
    \begin{itemize}
        \item The answer NA means that the paper does not include theoretical results. 
        \item All the theorems, formulas, and proofs in the paper should be numbered and cross-referenced.
        \item All assumptions should be clearly stated or referenced in the statement of any theorems.
        \item The proofs can either appear in the main paper or the supplemental material, but if they appear in the supplemental material, the authors are encouraged to provide a short proof sketch to provide intuition. 
        \item Inversely, any informal proof provided in the core of the paper should be complemented by formal proofs provided in appendix or supplemental material.
        \item Theorems and Lemmas that the proof relies upon should be properly referenced. 
    \end{itemize}

    \item {\bf Experimental result reproducibility}
    \item[] Question: Does the paper fully disclose all the information needed to reproduce the main experimental results of the paper to the extent that it affects the main claims and/or conclusions of the paper (regardless of whether the code and data are provided or not)?
    \item[] Answer: \answerYes{} 
    \item[] Justification: The disclosed information is enough to reproduce the main experiments. We will also release the source code late.
    \item[] Guidelines:
    \begin{itemize}
        \item The answer NA means that the paper does not include experiments.
        \item If the paper includes experiments, a No answer to this question will not be perceived well by the reviewers: Making the paper reproducible is important, regardless of whether the code and data are provided or not.
        \item If the contribution is a dataset and/or model, the authors should describe the steps taken to make their results reproducible or verifiable. 
        \item Depending on the contribution, reproducibility can be accomplished in various ways. For example, if the contribution is a novel architecture, describing the architecture fully might suffice, or if the contribution is a specific model and empirical evaluation, it may be necessary to either make it possible for others to replicate the model with the same dataset, or provide access to the model. In general. releasing code and data is often one good way to accomplish this, but reproducibility can also be provided via detailed instructions for how to replicate the results, access to a hosted model (e.g., in the case of a large language model), releasing of a model checkpoint, or other means that are appropriate to the research performed.
        \item While NeurIPS does not require releasing code, the conference does require all submissions to provide some reasonable avenue for reproducibility, which may depend on the nature of the contribution. For example
        \begin{enumerate}
            \item If the contribution is primarily a new algorithm, the paper should make it clear how to reproduce that algorithm.
            \item If the contribution is primarily a new model architecture, the paper should describe the architecture clearly and fully.
            \item If the contribution is a new model (e.g., a large language model), then there should either be a way to access this model for reproducing the results or a way to reproduce the model (e.g., with an open-source dataset or instructions for how to construct the dataset).
            \item We recognize that reproducibility may be tricky in some cases, in which case authors are welcome to describe the particular way they provide for reproducibility. In the case of closed-source models, it may be that access to the model is limited in some way (e.g., to registered users), but it should be possible for other researchers to have some path to reproducing or verifying the results.
        \end{enumerate}
    \end{itemize}

\item {\bf Open access to data and code}
    \item[] Question: Does the paper provide open access to the data and code, with sufficient instructions to faithfully reproduce the main experimental results, as described in supplemental material?
    \item[] Answer: \answerYes{} 
    \item[] Justification: We provide essential parts for the code and details in supplemental material.
    \item[] Guidelines:
    \begin{itemize}
        \item The answer NA means that paper does not include experiments requiring code.
        \item Please see the NeurIPS code and data submission guidelines (\url{https://nips.cc/public/guides/CodeSubmissionPolicy}) for more details.
        \item While we encourage the release of code and data, we understand that this might not be possible, so “No” is an acceptable answer. Papers cannot be rejected simply for not including code, unless this is central to the contribution (e.g., for a new open-source benchmark).
        \item The instructions should contain the exact command and environment needed to run to reproduce the results. See the NeurIPS code and data submission guidelines (\url{https://nips.cc/public/guides/CodeSubmissionPolicy}) for more details.
        \item The authors should provide instructions on data access and preparation, including how to access the raw data, preprocessed data, intermediate data, and generated data, etc.
        \item The authors should provide scripts to reproduce all experimental results for the new proposed method and baselines. If only a subset of experiments are reproducible, they should state which ones are omitted from the script and why.
        \item At submission time, to preserve anonymity, the authors should release anonymized versions (if applicable).
        \item Providing as much information as possible in supplemental material (appended to the paper) is recommended, but including URLs to data and code is permitted.
    \end{itemize}

\item {\bf Experimental setting/details}
    \item[] Question: Does the paper specify all the training and test details (e.g., data splits, hyperparameters, how they were chosen, type of optimizer, etc.) necessary to understand the results?
    \item[] Answer: \answerYes{} 
    \item[] Justification: We provide all details about the experiments.
    \item[] Guidelines:
    \begin{itemize}
        \item The answer NA means that the paper does not include experiments.
        \item The experimental setting should be presented in the core of the paper to a level of detail that is necessary to appreciate the results and make sense of them.
        \item The full details can be provided either with the code, in appendix, or as supplemental material.
    \end{itemize}

\item {\bf Experiment statistical significance}
    \item[] Question: Does the paper report error bars suitably and correctly defined or other appropriate information about the statistical significance of the experiments?
    \item[] Answer: \answerYes{} 
    \item[] Justification: The error is shown in our experiments.
    \item[] Guidelines:
    \begin{itemize}
        \item The answer NA means that the paper does not include experiments.
        \item The authors should answer "Yes" if the results are accompanied by error bars, confidence intervals, or statistical significance tests, at least for the experiments that support the main claims of the paper.
        \item The factors of variability that the error bars are capturing should be clearly stated (for example, train/test split, initialization, random drawing of some parameter, or overall run with given experimental conditions).
        \item The method for calculating the error bars should be explained (closed form formula, call to a library function, bootstrap, etc.)
        \item The assumptions made should be given (e.g., Normally distributed errors).
        \item It should be clear whether the error bar is the standard deviation or the standard error of the mean.
        \item It is OK to report 1-sigma error bars, but one should state it. The authors should preferably report a 2-sigma error bar than state that they have a 96\% CI, if the hypothesis of Normality of errors is not verified.
        \item For asymmetric distributions, the authors should be careful not to show in tables or figures symmetric error bars that would yield results that are out of range (e.g. negative error rates).
        \item If error bars are reported in tables or plots, The authors should explain in the text how they were calculated and reference the corresponding figures or tables in the text.
    \end{itemize}

\item {\bf Experiments compute resources}
    \item[] Question: For each experiment, does the paper provide sufficient information on the computer resources (type of compute workers, memory, time of execution) needed to reproduce the experiments?
    \item[] Answer: \answerYes{} 
    \item[] Justification: All experiments were conducted utilizing the NVIDIA GeForce RTX 4090 GPU.
    \item[] Guidelines:
    \begin{itemize}
        \item The answer NA means that the paper does not include experiments.
        \item The paper should indicate the type of compute workers CPU or GPU, internal cluster, or cloud provider, including relevant memory and storage.
        \item The paper should provide the amount of compute required for each of the individual experimental runs as well as estimate the total compute. 
        \item The paper should disclose whether the full research project required more compute than the experiments reported in the paper (e.g., preliminary or failed experiments that didn't make it into the paper). 
    \end{itemize}
    
\item {\bf Code of ethics}
    \item[] Question: Does the research conducted in the paper conform, in every respect, with the NeurIPS Code of Ethics \url{https://neurips.cc/public/EthicsGuidelines}?
    \item[] Answer: \answerYes{} 
    \item[] Justification: We follow the policy.
    \item[] Guidelines:
    \begin{itemize}
        \item The answer NA means that the authors have not reviewed the NeurIPS Code of Ethics.
        \item If the authors answer No, they should explain the special circumstances that require a deviation from the Code of Ethics.
        \item The authors should make sure to preserve anonymity (e.g., if there is a special consideration due to laws or regulations in their jurisdiction).
    \end{itemize}

\item {\bf Broader impacts}
    \item[] Question: Does the paper discuss both potential positive societal impacts and negative societal impacts of the work performed?
    \item[] Answer: \answerYes{} 
    \item[] Justification: Backdoor attacks against dataset condensation pose significant risks given the growing use of condensed datasets in privacy-sensitive or resource-constrained settings such as outsourced data compression, federated learning, machine unlearning, and continual learning. To mitigate misuse, we recommend: (1) incorporating robust anomaly detection and certified defenses during condensation; (2) encouraging transparency and reproducibility in condensation pipelines; and (3) enforcing rigorous provenance tracking to dataset generation processes. 
    \item[] Guidelines:
    \begin{itemize}
        \item The answer NA means that there is no societal impact of the work performed.
        \item If the authors answer NA or No, they should explain why their work has no societal impact or why the paper does not address societal impact.
        \item Examples of negative societal impacts include potential malicious or unintended uses (e.g., disinformation, generating fake profiles, surveillance), fairness considerations (e.g., deployment of technologies that could make decisions that unfairly impact specific groups), privacy considerations, and security considerations.
        \item The conference expects that many papers will be foundational research and not tied to particular applications, let alone deployments. However, if there is a direct path to any negative applications, the authors should point it out. For example, it is legitimate to point out that an improvement in the quality of generative models could be used to generate deepfakes for disinformation. On the other hand, it is not needed to point out that a generic algorithm for optimizing neural networks could enable people to train models that generate Deepfakes faster.
        \item The authors should consider possible harms that could arise when the technology is being used as intended and functioning correctly, harms that could arise when the technology is being used as intended but gives incorrect results, and harms following from (intentional or unintentional) misuse of the technology.
        \item If there are negative societal impacts, the authors could also discuss possible mitigation strategies (e.g., gated release of models, providing defenses in addition to attacks, mechanisms for monitoring misuse, mechanisms to monitor how a system learns from feedback over time, improving the efficiency and accessibility of ML).
    \end{itemize}
    
\item {\bf Safeguards}
    \item[] Question: Does the paper describe safeguards that have been put in place for responsible release of data or models that have a high risk for misuse (e.g., pretrained language models, image generators, or scraped datasets)?
    \item[] Answer: \answerNo{} 
    \item[] Justification: The primary contribution of our proposed \textsc{Sneakdoor} is to expose vulnerabilities in distribution-matching-based condensation methods. Our work lays the groundwork for understanding the attack surface and limitations of current defenses, enabling the community to proactively build secure and trustworthy dataset condensation frameworks.
    \item[] Guidelines:
    \begin{itemize}
        \item The answer NA means that the paper poses no such risks.
        \item Released models that have a high risk for misuse or dual-use should be released with necessary safeguards to allow for controlled use of the model, for example by requiring that users adhere to usage guidelines or restrictions to access the model or implementing safety filters. 
        \item Datasets that have been scraped from the Internet could pose safety risks. The authors should describe how they avoided releasing unsafe images.
        \item We recognize that providing effective safeguards is challenging, and many papers do not require this, but we encourage authors to take this into account and make a best-faith effort.
    \end{itemize}

\item {\bf Licenses for existing assets}
    \item[] Question: Are the creators or original owners of assets (e.g., code, data, models), used in the paper, properly credited and are the license and terms of use explicitly mentioned and properly respected?
    \item[] Answer: \answerYes{} 
    \item[] Justification: We cite the original paper that produced the code package or dataset.
    \item[] Guidelines:
    \begin{itemize}
        \item The answer NA means that the paper does not use existing assets.
        \item The authors should cite the original paper that produced the code package or dataset.
        \item The authors should state which version of the asset is used and, if possible, include a URL.
        \item The name of the license (e.g., CC-BY 4.0) should be included for each asset.
        \item For scraped data from a particular source (e.g., website), the copyright and terms of service of that source should be provided.
        \item If assets are released, the license, copyright information, and terms of use in the package should be provided. For popular datasets, \url{paperswithcode.com/datasets} has curated licenses for some datasets. Their licensing guide can help determine the license of a dataset.
        \item For existing datasets that are re-packaged, both the original license and the license of the derived asset (if it has changed) should be provided.
        \item If this information is not available online, the authors are encouraged to reach out to the asset's creators.
    \end{itemize}

\item {\bf New assets}
    \item[] Question: Are new assets introduced in the paper well documented and is the documentation provided alongside the assets?
    \item[] Answer: \answerNA{} 
    \item[] Justification: \answerNA{}
    \item[] Guidelines:
    \begin{itemize}
        \item The answer NA means that the paper does not release new assets.
        \item Researchers should communicate the details of the dataset/code/model as part of their submissions via structured templates. This includes details about training, license, limitations, etc. 
        \item The paper should discuss whether and how consent was obtained from people whose asset is used.
        \item At submission time, remember to anonymize your assets (if applicable). You can either create an anonymized URL or include an anonymized zip file.
    \end{itemize}

\item {\bf Crowdsourcing and research with human subjects}
    \item[] Question: For crowdsourcing experiments and research with human subjects, does the paper include the full text of instructions given to participants and screenshots, if applicable, as well as details about compensation (if any)? 
    \item[] Answer: \answerNA{} 
    \item[] Justification: \answerNA{}
    \item[] Guidelines:
    \begin{itemize}
        \item The answer NA means that the paper does not involve crowdsourcing nor research with human subjects.
        \item Including this information in the supplemental material is fine, but if the main contribution of the paper involves human subjects, then as much detail as possible should be included in the main paper. 
        \item According to the NeurIPS Code of Ethics, workers involved in data collection, curation, or other labor should be paid at least the minimum wage in the country of the data collector. 
    \end{itemize}

\item {\bf Institutional review board (IRB) approvals or equivalent for research with human subjects}
    \item[] Question: Does the paper describe potential risks incurred by study participants, whether such risks were disclosed to the subjects, and whether Institutional Review Board (IRB) approvals (or an equivalent approval/review based on the requirements of your country or institution) were obtained?
    \item[] Answer: \answerNA{} 
    \item[] Justification: \answerNA{}
    \item[] Guidelines:
    \begin{itemize}
        \item The answer NA means that the paper does not involve crowdsourcing nor research with human subjects.
        \item Depending on the country in which research is conducted, IRB approval (or equivalent) may be required for any human subjects research. If you obtained IRB approval, you should clearly state this in the paper. 
        \item We recognize that the procedures for this may vary significantly between institutions and locations, and we expect authors to adhere to the NeurIPS Code of Ethics and the guidelines for their institution. 
        \item For initial submissions, do not include any information that would break anonymity (if applicable), such as the institution conducting the review.
    \end{itemize}

\item {\bf Declaration of LLM usage}
    \item[] Question: Does the paper describe the usage of LLMs if it is an important, original, or non-standard component of the core methods in this research? Note that if the LLM is used only for writing, editing, or formatting purposes and does not impact the core methodology, scientific rigorousness, or originality of the research, declaration is not required.
    \item[] Answer: \answerNA{} 
    \item[] Justification: The LLM was used solely for language editing and clarity improvement. It did not contribute to the design, implementation, or validation of the proposed methods.
    \item[] Guidelines:
    \begin{itemize}
        \item The answer NA means that the core method development in this research does not involve LLMs as any important, original, or non-standard components.
        \item Please refer to our LLM policy (\url{https://neurips.cc/Conferences/2025/LLM}) for what should or should not be described.
    \end{itemize}
\end{enumerate}

\newpage
\appendix

\section{Attacker's Goal}
\textbf{Attacker’s Goal.} The attacker aims to achieve a multi-faceted objective when injecting backdoors into condensed datasets. This objective consists of three key goals: maintaining stealthiness, ensuring backdoor effectiveness, and preserving model utility on clean data.

\textit{Stealthiness (STE)}. The attacker’s goal is to ensure that malicious modifications remain imperceptible. This involves two requirements. Firstly, the poisoned condensed dataset $\widetilde{\mathcal{D}}$ must be visually and statistically indistinguishable from the clean version $\mathcal{D}$. This is critical, as condensed datasets are small ($|\widetilde{\mathcal{D}}|\ll|\mathcal{D}|$) and likely to be examined manually. Secondly, the triggered test samples remain imperceptibly different from unmodified test data. This requirement ensures that the backdoor remains undetectable during evaluation or deployment, whether through human inspection or automated analysis. 

\textit{Attack Success Rate (ASR)}. In parallel, the attacker aims to embed a functional backdoor that remains inactive during standard operation but activates reliably in the presence of a specific trigger. Let $f$ denote the downstream model trained on $\widetilde{\mathcal{D}}$ and $\Delta$ the backdoor trigger. For a triggered test sample $x_i+\Delta$, the ASR defined as:
\begin{equation}
    ASR=\frac{1}{N_t}\sum_{i=1}^{N_t}\mathbb{I}(f(x_i+\Delta)=t)
\end{equation}
where $t$ is the target label, $N_t$ is the number of triggered test samples, and $\mathbb{I}$ is the indicator function. The attacker aims to maximize ASR.

\textit{Clean Test Accuracy (CTA)}. Simultaneously, the attacker must preserve model accuracy on clean, non-triggered data. In other words, the condensed dataset must retain sufficient utility to support standard training objectives. This ensures that models trained on the poisoned data still generalize well to benign test sets. Let the clean test accuracy be defined as:
\begin{equation}
    CTA=\frac{1}{N_c}\sum_{i=1}^{N_c}\mathbb{I}(f(x_i)=y_i)
\end{equation}
where $y_i$ is the ground truth label of the test sample $x_i$, $N_c$ is the number of clean test samples. The attacker seeks to maintain a high CTA so that the backdoor remains covert.

\section{Stealthiness Analysis}
A critical challenge in designing effective backdoor attacks on dataset condensation is achieving stealthiness, ensuring that poisoned samples and the resulting synthetic data are indistinguishable from their clean counterparts. Our goal is to formalize stealthiness through a geometric and distributional lens, grounded in the feature space induced by deep neural architectures.

To this end, our analysis is guided by the following question: How does input-aware backdoor injection perturb the structure of data manifolds in feature space, and can this deviation be rigorously bounded to guarantee stealth? Since distribution matching-based condensation aligns global feature statistics (\textit{e.g.}, moments of embedded data), it is essential to understand whether triggers introduce detectable geometric or statistical anomalies in the condensed representation. We conduct our analysis in a Reproducing Kernel Hilbert Space (RKHS)~\cite{aronszajn1950theory,berlinet2011reproducing,ghojogh2021reproducing}, where class-specific data, both clean and triggered, are assumed to lie on smooth, locally compact manifolds. By modeling the trigger as a bounded, input-aware perturbation and invoking assumptions on manifold regularity and inter-class proximity, we show that triggered samples remain tightly coupled to the clean data manifold under mild conditions. This theoretical framework enables us to quantify the effect of poisoning both at the feature level (Theorem~\ref{theorem:upper bound on feature-manifold}) and at the level of the condensed dataset (Theorem~\ref{theorem:MMD}). These results provide principled justification for \textsc{Sneakdoor}’s empirical stealth: the perturbations introduced by the trigger remain latent-space-aligned and distributionally consistent, limiting their detectability after condensation.

\begin{assumption}[Lipschitz Continuity]
    The feature mapping $f_{\theta_f} : \mathcal{X} \to \mathcal{H}$ is assumed to be Lipschitz continuous. That is, for all $x, x' \in \mathcal{X}$,
    \begin{equation}
        \|f_{\theta_f}(x) - f_{\theta_f}(x')\|_{\mathcal{H}} \leq L_f \|x - x'\|_{\infty},
    \end{equation}
    
    where $L_f \in \mathbb{R}^+$ denotes the Lipschitz constant, and $\|\cdot\|_{\infty}$ is the $L_\infty$-norm in the input space.
\label{assumption:Lipschitz Continuity}
\end{assumption}

\begin{assumption}[Local Compactness of Feature Manifolds]
Let the clean target class dataset $\mathcal{T}_{y_\tau}$ and the triggered dataset $\mathcal{T}_\text{triggered}$ lie on smooth manifolds $\mathcal{M}_\text{clean}$ and $\mathcal{M}_\text{triggered}$, respectively, embedded in a Reproducing Kernel Hilbert Space (RKHS) $\mathcal{H}$. The following condition holds: For any point $z \in \mathcal{M}_\text{clean}$, there exists a neighborhood $\mathcal{N}(z) \subset \mathcal{H}$ and a diffeomorphism $\varphi_z: \mathcal{N}(z) \cap \mathcal{M}_\text{clean} \to U \subset \mathbb{R}^d$, where $U$ is an open subset and $d$ is the intrinsic dimension of the manifold. 
\label{assumption:local compactness}
\end{assumption}

\begin{assumption}[Inter-Class Hausdorff Distance]
Let $\mathcal{M}_\text{source}$ and $\mathcal{M}_\text{clean}$ denote the RKHS-embedded manifolds of the source and target (clean) classes, respectively. Their Hausdorff distance is defined as:
\begin{equation}
\delta\triangleq\sup_{z_s\in\mathcal{M}_\text{source}}\inf_{z_\tau \in \mathcal{M}_\text{clean}}\|z_s - z_\tau\|_{\mathcal{H}}
\end{equation}

This condition implies that the decision boundary between source and target classes is locally reachable in feature space, enabling feasible cross-class perturbations by the trigger generator.
\label{assumption:Inter-Class Boundary Proximi}
\end{assumption}

\begin{lemma}[Boundedness of Latent Space Perturbation]
    Under Assumption~\ref{assumption:Lipschitz Continuity} (Lipschitz Continuity), the perturbation in the latent space of the triggered sample $\widetilde{x} = x + \alpha G_\phi(x)$ is bounded as follows:
    \begin{equation}
    \|f_{\theta_f}(\widetilde{x}) - f_{\theta_f}(x)\|_{\mathcal{H}} \leq L_f\alpha\varepsilon,
    \end{equation}
    
    where $L_f$ is the Lipschitz constant of the feature mapping $f_{\theta_f}$, and $\varepsilon$ is the upper bound on the input perturbation, satisfying $\|G_\phi(x)\|_\infty \leq \varepsilon$.
\label{lemma:Boundedness of Latent Space Perturbatio}
\end{lemma}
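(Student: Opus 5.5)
The bound follows by applying Assumption~\ref{assumption:Lipschitz Continuity} to the pair $(\widetilde{x}, x)$ and then controlling the input-space displacement through the generator constraint in Eq.~(\ref{eq:gen}).

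\textbf{Step 1 (Lipschitz bound).} Fix any $x \in \mathcal{T}_{y_s}$ and set $\widetilde{x} = x + \alpha G_\phi(x)$. Assumption~\ref{assumption:Lipschitz Continuity} with $x' = x$ gives
\begin{equation*}
\|f_{\theta_f}(\widetilde{x}) - f_{\theta_f}(x)\|_{\mathcal{H}} \leq L_f \|\widetilde{x} - x\|_\infty = L_f \|\alpha G_\phi(x)\|_\infty .
\end{equation*}

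\textbf{Step 2 (homogeneity of the norm).} Since $\alpha$ is a nonnegative constant, absolute homogeneity of $\|\cdot\|_\infty$ gives $\|\alpha G_\phi(x)\|_\infty = \alpha \|G_\phi(x)\|_\infty$. The generator constraint $\|G_\phi(x)\|_\infty \leq \varepsilon$ then yields $L_f\alpha\|G_\phi(x)\|_\infty \leq L_f\alpha\varepsilon$. Because $x$ was arbitrary, the bound holds uniformly over the source class.

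\textbf{Main subtlety: the second clamping step.} In practice, the clamped perturbation is added to $x$ and the result is clamped again to the valid pixel range, say $[0,1]^n$. Call this second, pixel-range clamp $\Pi$. If the lemma is read for the clamped sample $\widetilde{x}_{\mathrm{clip}} = \Pi(x + \alpha G_\phi(x))$, one more argument is needed:
\begin{itemize}
\item $x$ already lies in the valid set, so $x = \Pi(x)$.
\item Coordinatewise clamping onto a box is 1-Lipschitz in $\ell_\infty$. Each scalar clamp $t \mapsto \min(\max(t,0),1)$ is nonexpansive, and $\ell_\infty$ takes the coordinatewise maximum.
\item Therefore $\|\widetilde{x}_{\mathrm{clip}} - x\|_\infty = \|\Pi(x + \alpha G_\phi(x)) - \Pi(x)\|_\infty \leq \alpha\|G_\phi(x)\|_\infty \leq \alpha\varepsilon$.
\end{itemize}
With this, Step 1 goes through unchanged and the bound is preserved.

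The remaining point is a mismatch of conventions: the statement uses the non-strict bound $\|G_\phi(x)\|_\infty \leq \varepsilon$, whereas Eq.~(\ref{eq:gen}) uses a strict inequality. The strict form only strengthens the conclusion, so no extra work is required.
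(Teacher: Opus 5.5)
Your proof is correct and matches the paper's argument: apply Assumption~\ref{assumption:Lipschitz Continuity} to the pair $(\widetilde{x}, x)$, then use $\|\alpha G_\phi(x)\|_\infty \le \alpha\varepsilon$. Your extra observation is also sound and goes beyond the paper's proof: because pixel-range clamping is nonexpansive in $\ell_\infty$, the bound still holds for the clipped sample, a case the paper does not address.
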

\begin{proof}
    According to Eq (\ref{eq:gen}), the perturbation generated by the trigger generator $G_\phi$  satisfies the input space constraint $\|G_\phi(x)\|_\infty\leq\varepsilon$. Therefore, the following conclusion can be obtained:
    \begin{equation}
    \begin{aligned}
        \|f_{\theta_f}(\widetilde{x}) - f_{\theta_f}(x)\|_{\mathcal{H}}&=\|f_{\theta_f}(x+\alpha G_\phi(x)) - f_{\theta_f}(x)\|_{\mathcal{H}}\\
        &\leq L_f\|\alpha G_\phi(x)\|_\infty\\
        &\leq L_f\alpha\varepsilon
    \end{aligned}
    \end{equation}

This lemma shows that the perturbation's effect in the feature space is controlled by both the input perturbation bound $\alpha$, $\varepsilon$ and the Lipschitz constant $L_f$.
\end{proof}
\begin{lemma}
    Let $\mathcal{M}_\text{clean}$ and $\mathcal{M}_\text{triggered}$ be smooth manifolds in the Reproducing Kernel Hilbert Space (RKHS) $\mathcal{H}$, induced by the feature map $f_{\theta_f}:\mathcal{X}\mapsto\mathcal{H}$. Under Assumption~\ref{assumption:Lipschitz Continuity}, \ref{assumption:local compactness}, and \ref{assumption:Inter-Class Boundary Proximi}, there exists a diffeomorphism $\Psi: \mathcal{M}_\text{source} \to \mathcal{M}_\text{triggered}$ such that: (1) $\sup_{z_s \in \mathcal{M}_\text{source}} \| \Psi(z_s) - z_s \|_{\mathcal{H}} \leq \gamma \varepsilon, \quad \text{where } \gamma = L_f \alpha$. (2) $\mathcal{M}_\text{triggered} \subset \mathcal{N}_{\delta'}(\mathcal{M}_\text{clean}), \quad \delta' = L_f \alpha \varepsilon + \delta$, where $\mathcal{N}_{\delta'}(\mathcal{M}_\text{clean})$ denotes the $\delta'$-neighborhood of $\mathcal{M}_\text{clean}$ in $\mathcal{H}$.
\label{lemma:diffeomorphism}
\end{lemma}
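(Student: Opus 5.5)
We construct $\Psi$ directly from the trigger generator and then read off both claims from Lemma~\ref{lemma:Boundedness of Latent Space Perturbatio} and Assumption~\ref{assumption:Inter-Class Boundary Proximi}.

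\textbf{Construction of $\Psi$.} For $z_s = f_{\theta_f}(x)$ with $x \in \mathcal{T}_{y_s}$, set
\[
\Psi(z_s) = f_{\theta_f}\bigl(x + \alpha G_\phi(x)\bigr).
\]
By construction, $\mathcal{M}_\text{triggered}$ is exactly the image of $\mathcal{M}_\text{source}$ under this map.

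\textbf{Well-definedness and the diffeomorphism property.} This step is the main obstacle, because it is not implied by the stated assumptions alone. Two gaps must be closed.

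First, $\Psi$ is defined on inputs, not on features, so it is well defined on $\mathcal{M}_\text{source}$ only if $f_{\theta_f}$ is injective on the source data. I would either assume this (a local inverse $f_{\theta_f}^{-1}$ on $\mathcal{M}_\text{source}$), or use the charts $\varphi_z$ of Assumption~\ref{assumption:local compactness}, transported to the source manifold, to express $\Psi$ as the map
\[
z \mapsto f_{\theta_f}\bigl(\iota(z) + \alpha G_\phi(\iota(z))\bigr)
\]
in local coordinates, where $\iota$ is a local inverse.

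Second, $\Psi$ must be smooth with a smooth inverse. I would write $\Psi = \mathrm{id} + \Delta$, where by Lemma~\ref{lemma:Boundedness of Latent Space Perturbatio} we have $\|\Delta\|_{\mathcal{H}} \le \gamma\varepsilon$. Assuming $G_\phi$ and $f_{\theta_f}$ are smooth and $\alpha$ is small enough that the differential of $\Delta$ has operator norm below $1$ in charts, the inverse function theorem (a Neumann series for $I + D\Delta$) gives a local diffeomorphism. Injectivity then follows from the contraction estimate
\[
\|\Psi(z) - \Psi(z')\| \ge (1 - \|D\Delta\|)\,\|z - z'\|.
\]
Since $\mathcal{M}_\text{triggered}$ is defined as the image of $\Psi$, the map is a bijection onto its image, so $\Psi$ is a diffeomorphism onto $\mathcal{M}_\text{triggered}$. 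I would state the required smallness of $\alpha$ (equivalently, of $\alpha L_G$ for a Lipschitz constant $L_G$ of $G_\phi$) explicitly as the condition under which the lemma holds.

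\textbf{Claim (1).} Take any $z_s = f_{\theta_f}(x) \in \mathcal{M}_\text{source}$. Then
\[
\|\Psi(z_s) - z_s\|_{\mathcal{H}} = \|f_{\theta_f}(\widetilde{x}) - f_{\theta_f}(x)\|_{\mathcal{H}} \le L_f\alpha\varepsilon = \gamma\varepsilon
\]
by Lemma~\ref{lemma:Boundedness of Latent Space Perturbatio}. Taking the supremum over $z_s$ gives (1).

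\textbf{Claim (2).} Let $z_t \in \mathcal{M}_\text{triggered}$, and write $z_t = \Psi(z_s)$. By Assumption~\ref{assumption:Inter-Class Boundary Proximi}, for any $\eta > 0$ there is $z_\tau \in \mathcal{M}_\text{clean}$ with $\|z_s - z_\tau\| \le \delta + \eta$. The triangle inequality then gives
\[
\inf_{z_\tau \in \mathcal{M}_\text{clean}} \|z_t - z_\tau\| \le \|z_t - z_s\| + \|z_s - z_\tau\| \le \gamma\varepsilon + \delta + \eta.
\]
Letting $\eta \to 0$ shows that $z_t$ lies in the closed $\delta'$-neighborhood of $\mathcal{M}_\text{clean}$, with $\delta' = L_f\alpha\varepsilon + \delta$. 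This proves (2), with the neighborhood $\mathcal{N}_{\delta'}$ interpreted as closed, or as open with any slack.
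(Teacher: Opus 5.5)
Your arguments for the two quantitative claims match the paper's: (1) is Lemma~\ref{lemma:Boundedness of Latent Space Perturbatio} applied pointwise, and (2) is the triangle inequality with Assumption~\ref{assumption:Inter-Class Boundary Proximi}. Your $\eta$-argument is slightly more careful, since the paper treats the infimum in that assumption as attained.

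You differ in the diffeomorphism part. The paper transfers the charts of Assumption~\ref{assumption:local compactness} to $\mathcal{M}_\text{source}$ and defines local maps $\psi_{s_i}(u)=f_{\theta_f}\bigl(f_{\theta_f}^{-1}(\varphi_{s_i}^{-1}(u))+\alpha G_\phi(f_{\theta_f}^{-1}(\varphi_{s_i}^{-1}(u)))\bigr)$. It glues the $\psi_{s_i}\circ\varphi_{s_i}$ with a finite partition of unity, then asserts injectivity, surjectivity and a smooth inverse via the inverse function theorem. Every $\psi_{s_i}\circ\varphi_{s_i}$ equals your global map $f_{\theta_f}(x)\mapsto f_{\theta_f}(x+\alpha G_\phi(x))$. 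So the gluing is vacuous and both routes yield the same $\Psi$. (If the inverses were only local, a convex combination in $\mathcal{H}$ of points of $\mathcal{M}_\text{triggered}$ would generally leave $\mathcal{M}_\text{triggered}$.)

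The paper's verification tacitly assumes that $f_{\theta_f}$ is invertible, that a finite cover exists, and that $D\psi_{s_i}$ is nondegenerate. Also, ``local injectivity plus a partition of unity'' does not give global injectivity. What your route buys is that it isolates the hypotheses that are genuinely needed. You are right that Assumptions 1--3 alone do not suffice: $x\mapsto x+\alpha G_\phi(x)$ can identify two source points at $\ell_\infty$-distance at most $2\alpha\varepsilon$ without violating $\|G_\phi\|_\infty\le\varepsilon$.

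One step of yours needs its hypothesis restated. The bound $\|\Psi(z)-\Psi(z')\|_{\mathcal{H}}\ge(1-c)\|z-z'\|_{\mathcal{H}}$ requires $\|\Delta(z)-\Delta(z')\|_{\mathcal{H}}\le c\|z-z'\|_{\mathcal{H}}$ on $\mathcal{M}_\text{source}$ with $c<1$. A pointwise bound $\|D\Delta\|\le c$ only controls $\|\Delta(z)-\Delta(z')\|_{\mathcal{H}}$ by $c$ times the intrinsic path length from $z$ to $z'$. On a curved manifold that length can far exceed the chordal distance $\|z-z'\|_{\mathcal{H}}$.

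Stating the smallness condition as this ambient Lipschitz bound on $\Delta$ fixes the step. It implies $\|D\Delta\|\le c$, so $\Psi$ is an immersion. Together with the resulting Lipschitz inverse, it makes $\Psi$ an embedding, hence a diffeomorphism onto $\mathcal{M}_\text{triggered}$.
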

\begin{proof}
    By Assumption~\ref{assumption:local compactness}, for each $z_s \in \mathcal{M}_\text{source}$, there exists a local chart $\varphi_s: \mathcal{N}(z_s) \cap \mathcal{M}_\text{source} \to U_s \subset \mathbb{R}^d$, where $\mathcal{N}(z_s) \subset \mathcal{H}$ is a neighborhood and $U_s$ is an open subset.

    Define the local mapping $\psi_s: U_s \mapsto \mathcal{M}_\text{triggered}$ by:
    \begin{equation}
        \psi_s(u) = f_{\theta_f}\left( f_{\theta_f}^{-1}(\varphi_s^{-1}(u)) + \alpha G_\phi(f_{\theta_f}^{-1}(\varphi_s^{-1}(u))) \right)
    \end{equation}
    
    The smoothness of $\psi_s$ follows from the differentiability of $G_\phi$ and $f_{\theta_f}$. Then, by Lemma~\ref{lemma:Boundedness of Latent Space Perturbatio}, we can obtain: $\|\psi_s(u) - \varphi_s^{-1}(u)\|_{\mathcal{H}} \leq L_f \alpha \varepsilon = \gamma \varepsilon$.

    To construct a global diffeomorphism, take a finite open cover $\{ \mathcal{N}(z_{s_i}) \}_{i=1}^k$ of $\mathcal{M}_\text{source}$, with corresponding charts $\varphi_{s_i}$ and a smooth partition of unity $\{ \rho_i \}$:
    \begin{equation}
        \Psi(z_s) = \sum_{i=1}^k \rho_i(z_s) \cdot \psi_{s_i}(\varphi_{s_i}(z_s)).
    \end{equation}

    We now bound the total perturbation:
    \begin{equation}
    \begin{aligned}
    \| \Psi(z_s) - z_s \|_{\mathcal{H}} 
    &\leq \sum_{i=1}^k \rho_i(z_s)\| \psi_{s_i}(\varphi_{s_i}(z_s)) - z_s \|_{\mathcal{H}} \\
    &\leq \sum_{i=1}^k \rho_i(z_s) L_f \alpha \varepsilon \\
    &= L_f \alpha \varepsilon \\
    &= \gamma \varepsilon
    \end{aligned}
    \end{equation}

For any $z_t \in \mathcal{M}_\text{triggered}$, there exists $z_s \in \mathcal{M}_\text{source}$ such that $z_t = \Psi(z_s)$. By Assumption 3, there exists $z_\tau \in \mathcal{M}_\text{clean}$ with
$\|z_s - z_\tau\|_{\mathcal{H}} \leq \delta$. Then by the triangle inequality:

\begin{equation}
\begin{aligned}
    \|z_t - z_\tau\|_{\mathcal{H}} &\leq \|z_t - z_s\|_{\mathcal{H}} + \|z_s - z_\tau\|_{\mathcal{H}} \\
    &\leq L_f \alpha \varepsilon + \delta = \delta'
\end{aligned}
\end{equation}

Hence, $\mathcal{M}_\text{triggered} \subset \mathcal{N}_{\delta'}(\mathcal{M}_\text{clean})$.

To verify $\Psi$ is a diffeomorphism:
\begin{itemize}
    \item {Injectivity:} Follows from local injectivity of each $\psi_{s_i}$ and the partition of unity.
    \item {Surjectivity:} For any $z_t \in \mathcal{M}_\text{triggered}$, there exists $x \in \mathcal{T}_{y_s}$ such that $z_t = f_{\theta_f}(x + \alpha G_\phi(x)) = \Psi(f_{\theta_f}(x))$.
    \item {Smooth Inverse:} Local inverses $\psi_{s_i}^{-1}$ exist by the inverse function theorem and can be smoothly blended via $\{\rho_i\}$.
\end{itemize}
\end{proof}

\begin{theorem}[Upper Bound on Feature-Manifold Deviation under Poisoning]
Let \( \mathcal{T}_{y_\tau} \) denote the clean target-class dataset and \( \mathcal{T}_{\mathrm{triggered}} \) the triggered (poisoned) dataset, with corresponding feature-space distributions \( P_{\mathcal{M}_{\mathrm{clean}}} \) and \( P_{\mathcal{M}_{\mathrm{triggered}}} \), respectively. Define the mixed distribution as:
\[
P_{\mathcal{M}_{\mathrm{mixed}}} = (1-\rho) P_{\mathcal{M}_{\mathrm{clean}}} + \rho P_{\mathcal{M}_{\mathrm{triggered}}},
\]

where \( \rho \in [0,1] \) denotes the poisoning ratio. Under Assumptions \ref{assumption:Lipschitz Continuity}, \ref{assumption:local compactness}, and \ref{assumption:Inter-Class Boundary Proximi}, the expected deviation of samples from the mixed distribution to the target feature manifold satisfies:
\begin{equation}
    \mathbb{E}_{z \sim P_{\mathcal{M}_{\mathrm{mixed}}}} \left[ \inf_{z_\tau \in \mathcal{M}_{\mathrm{clean}}} \|z - z_\tau\|_{\mathcal{H}} \right] \leq \rho (\gamma \varepsilon + \delta),
\end{equation}

where \( \mathcal{H} \) is the RKHS associated with the feature encoder.
\label{theorem:upper bound on feature-manifold}
\end{theorem}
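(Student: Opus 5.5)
The plan is to use linearity of expectation over the mixture and then bound each component separately. Write $d(z) := \inf_{z_\tau \in \mathcal{M}_{\mathrm{clean}}} \|z - z_\tau\|_{\mathcal{H}}$ for the distance from a feature point to the clean target manifold. This function is nonnegative and $1$-Lipschitz, hence measurable, so its expectation under each component distribution is well defined. Because $P_{\mathcal{M}_{\mathrm{mixed}}}$ is a convex combination, the expectation splits as
\begin{equation*}
\mathbb{E}_{z \sim P_{\mathcal{M}_{\mathrm{mixed}}}}[d(z)] = (1-\rho)\,\mathbb{E}_{z \sim P_{\mathcal{M}_{\mathrm{clean}}}}[d(z)] + \rho\,\mathbb{E}_{z \sim P_{\mathcal{M}_{\mathrm{triggered}}}}[d(z)].
\end{equation*}

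The clean term vanishes. By Assumption~\ref{assumption:local compactness}, $P_{\mathcal{M}_{\mathrm{clean}}}$ is supported on $\mathcal{M}_{\mathrm{clean}}$, and every $z$ in that support satisfies $d(z)=0$ (take $z_\tau = z$). Its expectation is therefore $0$.

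For the triggered term, I will invoke Lemma~\ref{lemma:diffeomorphism}(2), which gives $\mathcal{M}_{\mathrm{triggered}} \subset \mathcal{N}_{\delta'}(\mathcal{M}_{\mathrm{clean}})$ with $\delta' = L_f\alpha\varepsilon + \delta = \gamma\varepsilon + \delta$. Concretely, take any $z_t$ in the support of $P_{\mathcal{M}_{\mathrm{triggered}}}$. It has the form $z_t = f_{\theta_f}(x+\alpha G_\phi(x))$ for some $x \in \mathcal{T}_{y_s}$. Set $z_s = f_{\theta_f}(x) \in \mathcal{M}_{\mathrm{source}}$. Lemma~\ref{lemma:Boundedness of Latent Space Perturbatio} gives $\|z_t - z_s\|_{\mathcal{H}} \le \gamma\varepsilon$. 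Assumption~\ref{assumption:Inter-Class Boundary Proximi} gives $d(z_s) \le \delta$. Since $d$ is $1$-Lipschitz, it follows that $d(z_t) \le \gamma\varepsilon + \delta$ pointwise. Taking expectations bounds the triggered term by $\gamma\varepsilon+\delta$, and combining with the vanishing clean term yields $\rho(\gamma\varepsilon+\delta)$.

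I expect the only delicate step to be the infimum, which need not be attained. I will handle it without assuming attainment: for any $\eta>0$, pick $z_\tau$ with $\|z_s - z_\tau\|_{\mathcal{H}} \le \delta+\eta$, apply the triangle inequality to get $d(z_t) \le \gamma\varepsilon + \delta + \eta$, and let $\eta \to 0$. Using the $1$-Lipschitz property of $d$ as above does the same thing directly. Beyond this, I only need to check that the supports of the component distributions lie on the stated manifolds, which the assumptions take as given.
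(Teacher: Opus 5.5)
Your proposal is correct and follows the paper's proof. You split the expectation over the mixture, note that the clean term vanishes because $P_{\mathcal{M}_{\mathrm{clean}}}$ lives on $\mathcal{M}_{\mathrm{clean}}$, and bound the triggered term pointwise by $\gamma\varepsilon+\delta$ before taking expectations. The one difference is that you unpack Lemma~\ref{lemma:diffeomorphism}(2) directly, using Lemma~\ref{lemma:Boundedness of Latent Space Perturbatio}, Assumption~\ref{assumption:Inter-Class Boundary Proximi} and the $1$-Lipschitz distance function; this avoids relying on the diffeomorphism $\Psi$ and on the infimum being attained, which the paper's lemma tacitly assumes.
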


\begin{proof}
By the linearity of expectation and the definition of \( P_{\mathcal{M}_{\mathrm{mixed}}} \), we have:
\begin{equation}    
\begin{aligned}
&\quad\mathbb{E}_{z \sim P_{\mathcal{M}_{\mathrm{mixed}}}} \left[ \inf_{z_\tau} \|z - z_\tau\|_{\mathcal{H}} \right]\\ 
&= (1-\rho) \underbrace{\mathbb{E}_{z \sim P_{\mathcal{M}_{\mathrm{clean}}}} \left[ \inf_{z_\tau} \|z - z_\tau\|_{\mathcal{H}} \right]}_{= 0} \\
&+ \rho \mathbb{E}_{z \sim P_{\mathcal{M}_{\mathrm{triggered}}}} \left[ \inf_{z_\tau} \|z - z_\tau\|_{\mathcal{H}} \right].
\end{aligned}
\label{eq:(29)}
\end{equation}

Since clean samples $z\sim P_{\mathcal{M}_\text{clean}}$ lie on the target manifold, their distance minimum distance to the target manifold is zero. Therefore:
\begin{equation}    
\begin{aligned}
&\quad\mathbb{E}_{z \sim P_{\mathcal{M}_{\mathrm{mixed}}}} \left[ \inf_{z_\tau} \|z - z_\tau\|_{\mathcal{H}} \right]\\ 
&= \rho \mathbb{E}_{z \sim P_{\mathcal{M}_{\mathrm{triggered}}}} \left[ \inf_{z_\tau} \|z - z_\tau\|_{\mathcal{H}} \right].
\end{aligned}
\end{equation}

By Lemma \ref{lemma:diffeomorphism}, for any \( z_t \in \mathcal{M}_{\mathrm{triggered}} \), there exists \( z_\tau \in \mathcal{M}_{\mathrm{clean}} \) such that:
\begin{equation}
    \|z_t - z_\tau\|_{\mathcal{H}} \leq \delta' = \gamma \varepsilon + \delta.
\end{equation}

Hence,
\begin{equation}
    \inf_{z_\tau \in \mathcal{M}_{\mathrm{clean}}} \|z_t - z_\tau\|_{\mathcal{H}} \leq \delta'.
\end{equation}

Taking the expectation over \( P_{\mathcal{M}_{\mathrm{triggered}}} \), we obtain:
\begin{equation}
    \mathbb{E}_{z \sim P_{\mathcal{M}_{\mathrm{triggered}}}} \left[ \inf_{z_\tau} \|z - z_\tau\|_{\mathcal{H}} \right] \leq \delta'.
\end{equation}

Substituting into Eq.(\ref{eq:(29)}) yields:
\begin{equation}
    \mathbb{E}_{z \sim P_{\mathcal{M}_{\mathrm{mixed}}}} \left[ \inf_{z_\tau} \|z - z_\tau\|_{\mathcal{H}} \right] \leq \rho (\gamma \varepsilon + \delta).
\end{equation}
\end{proof}

\begin{theorem}[Upper Bound on the Discrepancy Between Poisoned and Clean Condensation Datasets]
Let \( \mathcal{T}_{y_\tau} \) denote the clean target-class dataset and \( \mathcal{T}_{\mathrm{mixed}} = \mathcal{T}_{y_\tau} \cup \mathcal{T}_{\mathrm{triggered}} \), where \( \mathcal{T}_{\mathrm{triggered}} \) consists of source-class samples \( x \in \mathcal{T}_{y_s} \) perturbed by a trigger generator \( G_\phi \) and relabeled as the target class.

Let \( \mathcal{S}_{\mathrm{clean}} \) and \( \mathcal{S}_{\mathrm{poison}} \) denote the condensation datasets distilled from \( \mathcal{T}_{y_\tau} \) and \( \mathcal{T}_{\mathrm{mixed}} \), respectively, by minimizing:
\begin{equation}
    \mathcal{S}^* = \arg\min_{\mathcal{S}} \mathrm{MMD}(\mathcal{T}, \mathcal{S}) + \lambda \mathcal{R}(\mathcal{S}),
\end{equation}

where \( \mathcal{T} \in \{ \mathcal{T}_{y_\tau}, \mathcal{T}_{\mathrm{mixed}} \} \), \( \lambda > 0 \), and \( \mathcal{R} \) is a strongly convex regularizer.

Under Assumptions \ref{assumption:Lipschitz Continuity}, \ref{assumption:local compactness}, and \ref{assumption:Inter-Class Boundary Proximi}, the MMD between \( \mathcal{S}_{\mathrm{clean}} \) and \( \mathcal{S}_{\mathrm{poison}} \) satisfies:
\[
\mathrm{MMD}(\mathcal{S}_{\mathrm{clean}}, \mathcal{S}_{\mathrm{poison}}) \leq \frac{L_f^2 \rho (\gamma \varepsilon + \delta)}{\lambda \mu_R}
\]
where \( \gamma = L_f \alpha \), \( \delta = \sup_{z_s \in \mathcal{M}_{\mathrm{source}}} \inf_{z_\tau \in \mathcal{M}_{\mathrm{clean}}} \|z_s - z_\tau\|_{\mathcal{H}} \), \( \rho \) is the poisoning rate, and \( \varepsilon \) bounds the input perturbation.
\end{theorem}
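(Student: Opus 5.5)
The plan is a stability (perturbation) argument for strongly convex problems. The two condensation problems share the regularizer $\lambda\mathcal{R}$ and differ only in the data term. I would bound how far the minimizer moves when the objective changes, and then convert that parameter-space displacement into an MMD bound.

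\textbf{Step 1: stability of the minimizer.} Set $F_{\mathcal{T}}(\mathcal{S}) = \mathrm{MMD}(\mathcal{T},\mathcal{S}) + \lambda\mathcal{R}(\mathcal{S})$. I would treat $\mathcal{S}$ as a vector of synthetic inputs and assume, as is implicit in the statement, that the data term is convex in $\mathcal{S}$. Then $F_{\mathcal{T}}$ is $\lambda\mu_R$-strongly convex. Write $\mathcal{S}_{\mathrm{clean}}$ and $\mathcal{S}_{\mathrm{poison}}$ for the minimizers of $F_{\mathcal{T}_{y_\tau}}$ and $F_{\mathcal{T}_{\mathrm{mixed}}}$. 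Adding the two strong-convexity first-order inequalities at the respective minimizers gives
\[
\lambda\mu_R\|\mathcal{S}_{\mathrm{clean}}-\mathcal{S}_{\mathrm{poison}}\| \le \sup_{\mathcal{S}}\big\|\nabla_{\mathcal{S}}\mathrm{MMD}(\mathcal{T}_{\mathrm{mixed}},\mathcal{S})-\nabla_{\mathcal{S}}\mathrm{MMD}(\mathcal{T}_{y_\tau},\mathcal{S})\big\|.
\]

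\textbf{Step 2: bounding the gradient gap.} For the DM-style objective $\|\mu_{\mathcal{T}}-\mu_{\mathcal{S}}\|_{\mathcal{H}}^2$, with $\mu$ the mean embedding under $f_{\theta_f}$, only $\mu_{\mathcal{T}}$ changes between the two problems. The gradient in $\mathcal{S}$ is the Jacobian of $f_{\theta_f}$ applied to $(\mu_{\mathcal{S}}-\mu_{\mathcal{T}})$, and that Jacobian has norm at most $L_f$ by Assumption~\ref{assumption:Lipschitz Continuity}. The gradient gap is therefore at most $L_f\|\mu_{\mathcal{T}_{\mathrm{mixed}}}-\mu_{\mathcal{T}_{y_\tau}}\|_{\mathcal{H}}$, up to a constant absorbed into the normalization. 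Since $\mu_{\mathrm{mixed}}=(1-\rho)\mu_{\mathrm{clean}}+\rho\mu_{\mathrm{triggered}}$, this equals $\rho\|\mu_{\mathrm{triggered}}-\mu_{\mathrm{clean}}\|_{\mathcal{H}}$.

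I then bound the mean gap by coupling each triggered feature $z$ with its nearest point on $\mathcal{M}_{\mathrm{clean}}$, exactly as in Theorem~\ref{theorem:upper bound on feature-manifold}. Lemma~\ref{lemma:diffeomorphism}(2) gives per-sample distance at most $\gamma\varepsilon+\delta$. Jensen's inequality then gives $\|\mu_{\mathrm{mixed}}-\mu_{\mathrm{clean}}\|\le\rho(\gamma\varepsilon+\delta)$, which matches the right side of Theorem~\ref{theorem:upper bound on feature-manifold}.

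\textbf{Step 3: from parameter space to MMD.} By Lipschitz continuity again, $\mathrm{MMD}(\mathcal{S}_{\mathrm{clean}},\mathcal{S}_{\mathrm{poison}}) = \|\mu_{\mathcal{S}_{\mathrm{clean}}}-\mu_{\mathcal{S}_{\mathrm{poison}}}\|_{\mathcal{H}} \le L_f\|\mathcal{S}_{\mathrm{clean}}-\mathcal{S}_{\mathrm{poison}}\|$, matching synthetic points index-wise. Chaining Steps 1 to 3 yields $L_f\cdot L_f\rho(\gamma\varepsilon+\delta)/(\lambda\mu_R)$, which is the claimed bound.

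\textbf{Main obstacle.} The hard step is Step 2's identification of the mean-embedding gap with the manifold-distance expectation. The distance from a triggered feature to $\mathcal{M}_{\mathrm{clean}}$ bounds its distance to some clean point, not to the clean mean. To make the coupling argument rigorous, I would interpret $P_{\mathcal{M}_{\mathrm{clean}}}$ as supported so that the nearest-point projection pushes $P_{\mathcal{M}_{\mathrm{triggered}}}$ onto $P_{\mathcal{M}_{\mathrm{clean}}}$. Equivalently, this bounds the Wasserstein-1 distance, and MMD with a bounded-Lipschitz kernel is controlled by Wasserstein-1. Convexity of the MMD term in $\mathcal{S}$ is also nontrivial for deep encoders. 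I would state it explicitly as part of the strong-convexity setting rather than attempt to prove it.
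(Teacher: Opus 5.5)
\textbf{Your plan has a genuine gap, but it is the one you flagged yourself, and the paper shares it.} Your three steps are the paper's three steps: mixture decomposition of the mean embedding, strong-convexity stability of the minimizer, and Lipschitz conversion back to MMD. The step you call the main obstacle cannot be closed under the stated hypotheses. The paper's proof cites the feature-manifold deviation theorem and then simply asserts $\rho\|\mu_{\mathrm{clean}}-\mu_{\mathrm{triggered}}\|_{\mathcal{H}}\le\rho(\gamma\varepsilon+\delta)$. That theorem and Lemma~\ref{lemma:diffeomorphism} are support-level facts: each triggered feature lies within $\gamma\varepsilon+\delta$ of \emph{some} point of $\mathcal{M}_{\mathrm{clean}}$, and $\delta$ is only a directed Hausdorff distance. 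A mean-embedding gap depends on where the mass sits, and Assumptions 1--3 say nothing about that.

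In fact the statement is false without an added hypothesis. Take $\mathcal{X}=\mathcal{H}=\mathbb{R}$ and $f_{\theta_f}$ the identity ($L_f=1$). Let $P_{\mathcal{M}_{\mathrm{clean}}}$ be uniform on $\mathcal{M}_{\mathrm{clean}}=(0,1)$, and put the source features in a tiny interval around $0.1$, so $\delta=0$. Use one synthetic point, $\mathcal{R}(s)=s^2/2$ ($\mu_R=1$), and the convex objective $\frac{1}{2}(\mu_{\mathcal{T}}-s)^2+\frac{\lambda}{2}s^2$, so your convexity hypothesis holds too. The minimizer is $\mu_{\mathcal{T}}/(1+\lambda)$, hence $\mathrm{MMD}(\mathcal{S}_{\mathrm{clean}},\mathcal{S}_{\mathrm{poison}})=\rho|\mu_{\mathrm{clean}}-\mu_{\mathrm{triggered}}|/(1+\lambda)\approx 0.4\rho/(1+\lambda)$. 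The claimed bound $\rho\alpha\varepsilon/\lambda$ instead tends to $0$ with $\varepsilon$. The unsquared objective $|\mu_{\mathcal{T}}-s|+\frac{\lambda}{2}s^2$ fails the same way, since its minimizer is $\mu_{\mathcal{T}}$ itself when $\lambda\le 2$.

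So some distributional hypothesis is unavoidable, and what you have is a proof of the statement with that hypothesis added. Your projection-pushforward condition works, but it is stronger than what is needed. It suffices to assume $W_1(P_{\mathcal{M}_{\mathrm{triggered}}},P_{\mathcal{M}_{\mathrm{clean}}})\le\gamma\varepsilon+\delta$ with $\mathcal{H}$-norm cost, because under any coupling $\|\mathbb{E}z-\mathbb{E}z'\|_{\mathcal{H}}\le\mathbb{E}\|z-z'\|_{\mathcal{H}}$.

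\textbf{Elsewhere your route is tighter than the paper's.} The paper subtracts the two optimality conditions and bounds $\|\nabla_{\mathcal{S}}\mathrm{MMD}(\mathcal{T}_{y_\tau},\mathcal{S}_{\mathrm{clean}})-\nabla_{\mathcal{S}}\mathrm{MMD}(\mathcal{T}_{\mathrm{mixed}},\mathcal{S}_{\mathrm{poison}})\|$ by $L_f\,\mathrm{MMD}(\mathcal{T}_{y_\tau},\mathcal{T}_{\mathrm{mixed}})$, even though the two gradients sit at different points. That step only works if one pairs the difference with $\mathcal{S}_{\mathrm{clean}}-\mathcal{S}_{\mathrm{poison}}$ before Cauchy--Schwarz and uses monotonicity of the data-term gradient. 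That is exactly the convexity you state, which your same-point comparison exploits.

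Two further choices in your Step 2 carry weight and should be listed as hypotheses rather than absorbed.
\begin{itemize}
\item The $O(\rho)$ gradient gap needs the squared DM objective. For the literal $\|\mu_{\mathcal{T}}-\mu_{\mathcal{S}}\|_{\mathcal{H}}$ of the statement, each data-term gradient is a pullback of a unit vector. These unit vectors are opposite whenever $\mu_{\mathcal{S}}$ lies strictly between $\mu_{\mathrm{clean}}$ and $\mu_{\mathrm{mixed}}$, so the gap does not shrink with $\rho$.
\item With $\frac{1}{2}\|\mu_{\mathcal{T}}-\mu_{\mathcal{S}}\|_{\mathcal{H}}^2$, no constant needs absorbing. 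The $j$-th block of the gap is $\frac{1}{m}J_j^{*}(\mu_{\mathrm{clean}}-\mu_{\mathrm{mixed}})$, with $J_j$ the Jacobian of $f_{\theta_f}$ at $s_j$. Assumption~\ref{assumption:Lipschitz Continuity} is an $\ell_\infty\to\mathcal{H}$ bound, so $\|J_j^{*}v\|_2\le\|J_j^{*}v\|_1\le L_f\|v\|_{\mathcal{H}}$. The stacked gap is then at most $L_f\|\mu_{\mathrm{clean}}-\mu_{\mathrm{mixed}}\|_{\mathcal{H}}/\sqrt{m}$, and Step 3 closes with exactly $L_f^2\rho\|\mu_{\mathrm{clean}}-\mu_{\mathrm{triggered}}\|_{\mathcal{H}}/(\lambda\mu_R)$.
\end{itemize}
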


\begin{proof} By Theorem~\ref{theorem:upper bound on feature-manifold}:
\begin{equation}
    \mathbb{E}_{z \sim P_{\mathcal{M}_{\mathrm{mixed}}}} \left[ \inf_{z_\tau \in \mathcal{M}_{\mathrm{clean}}} \|z - z_\tau\|_{\mathcal{H}} \right] \leq \rho(\gamma \varepsilon + \delta).
\end{equation}

This inequality constrains the average deviation of the mixed distribution from the clean target manifold by $\rho(\gamma\varepsilon+\delta)$.

In RKHS, MMD can be expressed via the norm of mean embeddings:
\begin{equation}
    \mathrm{MMD}(\mathcal{T}_{y_\tau}, \mathcal{T}_\text{mixed}) = \|\mu_\text{clean} - \mu_\text{mixed}\|_{\mathcal{H}}.
\end{equation}

where 
\[\mu_\text{clean}=\mathbb{E}_{x\sim P_{\mathcal{T}_{y_\tau}}}[f_{\theta_f}(x)]
\]
\[\mu_\text{mixed}=\mathbb{E}_{x\sim P_{\mathcal{T}_{y_\text{mixed}}}}[f_{\theta_f}(x)]
\]

Using the decomposition, the mean embedding of the mixed distribution can be written as::
\begin{equation}
    \mu_\text{mixed} = (1-\rho)\mu_\text{clean} + \rho \mu_\text{triggered}
\end{equation}

we get:
\begin{equation}
    \mu_\text{clean} - \mu_\text{mixed} = \rho (\mu_\text{clean} - \mu_\text{triggered})
\end{equation}

Hence:
\begin{equation}
\begin{aligned}
    \mathrm{MMD}(\mathcal{T}_{y_\tau}, \mathcal{T}_\text{mixed}) &= \rho \|\mu_\text{clean} - \mu_\text{triggered}\|_{\mathcal{H}}\\
    &\leq \rho (\gamma \varepsilon + \delta)
\end{aligned}
\end{equation}

Let the clean and poisoned synthetic datasets, $\mathcal{S}_\text{clean}$ and $\mathcal{S}_\text{poison}$, be obtained by solving the following optimization problems:
\begin{equation}
\begin{aligned}
\mathcal{S}_\text{clean} &= \arg\min_{\mathcal{S}} \text{MMD}(\mathcal{T}_{y_\tau}, \mathcal{S}) + \lambda \mathcal{R}(\mathcal{S}), \\
\mathcal{S}_\text{poison} &= \arg\min_{\mathcal{S}} \text{MMD}(\mathcal{T}_\text{mixed}, \mathcal{S}) + \lambda \mathcal{R}(\mathcal{S})
\end{aligned}
\end{equation}

According to the first-order optimality condition, the solutions $\mathcal{S}_\text{clean}$ and $\mathcal{S}_\text{poison}$ satisfy:

\begin{equation}
\begin{aligned}
    \nabla \text{MMD}_\mathcal{S}(\mathcal{T}_{y_\tau}, \mathcal{S}_\text{clean})+\lambda\nabla\mathcal{R(\mathcal{S}_\text{clean})}&=0\\ 
    \nabla \text{MMD}_\mathcal{S}(\mathcal{T}_{y_\text{mixed}}, \mathcal{S}_\text{poison})+\lambda\nabla\mathcal{R(\mathcal{S}_\text{poison})}&=0
\end{aligned}
\end{equation}

Subtracting the optimality conditions:
\begin{equation}
\begin{aligned}
    \lambda(\nabla \mathcal{R}(\mathcal{S}_\text{clean}) - \nabla \mathcal{R}(\mathcal{S}_\text{poison})) &= \nabla \text{MMD}_\mathcal{S}(\mathcal{T}_\text{mixed}, \mathcal{S}_\text{poison})\\
    &- \nabla \text{MMD}_\mathcal{S}(\mathcal{T}_{y_\tau}, \mathcal{S}_\text{clean})
\end{aligned}
\end{equation}

Since \( \mathcal{R} \) is \( \mu_\mathcal{R} \)-strongly convex, we obtain:
\begin{equation}
\begin{aligned}
    &\quad\langle \nabla \mathcal{R}(\mathcal{S}_\text{clean}) - \nabla \mathcal{R}(\mathcal{S}_\text{poison}), \mathcal{S}_\text{clean} - \mathcal{S}_\text{poison} \rangle \\
    &\geq \mu_\mathcal{R} \|\mathcal{S}_\text{clean} - \mathcal{S}_\text{poison}\|^2
\end{aligned}
\end{equation}

Then, we can obtain:
\begin{equation}
\begin{aligned}
    &\quad\|\mathcal{S}_\text{clean} - \mathcal{S}_\text{poison}\|  \\
    &\leq\frac{\|\nabla_{\mathcal{S}} \text{MMD}(\mathcal{T}_{y_\tau}, \mathcal{S}_\text{clean}) - \nabla_{\mathcal{S}} \text{MMD}(\mathcal{T}_\text{mixed}, \mathcal{S}_\text{poison})\|}{\lambda \mu_\mathcal{R}}\\
    &\leq\frac{L_f\mathrm{MMD}(\mathcal{T}_{y_\tau}, \mathcal{T}_\text{mixed})}{\lambda \mu_\mathcal{R}}\\
    &\leq\frac{L_f\rho(\gamma\varepsilon+\delta)}{\lambda \mu_\mathcal{R}}
\end{aligned}
\end{equation}

According to Assumption~\ref{assumption:Lipschitz Continuity}:
\begin{equation}
\begin{aligned}
    \text{MMD}(\mathcal{S}_\text{clean}, \mathcal{S}_\text{poison}) &\leq L_f \|\mathcal{S}_\text{clean} - \mathcal{S}_\text{poison}\| \\
    &\leq \frac{L_f^2 \rho (\gamma \varepsilon + \delta)}{\lambda \mu_R}.
\end{aligned}
\end{equation}
\end{proof}

\section{Additional Experiments}
In dataset condensation, simple architectures such as ConvNet or AlexNetBN are typically employed as condensation networks, rather than more complex models. This design choice is motivated by several factors. First, computational efficiency and stability: simpler networks are faster and less resource-intensive to train, which is essential given the iterative optimization cycles required in dataset condensation. In contrast, deeper architectures substantially increase computational cost and introduce greater instability during optimization. Second, optimization tractability: simple models possess smoother and more navigable loss landscapes, facilitating the extraction of effective gradients from synthetic data. Complex architectures, with highly non-convex objectives, complicate this process and hinder optimization. Third, fairness and generality: the distilled data is intended to generalize across a range of architectures. Relying on a highly specialized, deep network risks overfitting the synthetic data to its unique characteristics. Employing a lightweight, generic model encourages the generation of broadly transferable synthetic datasets.

To further substantiate the choice of AlexNetBN as the condensation network, we report additional experimental results in the appendix. While ConvNet is widely adopted in dataset condensation for its simplicity, AlexNetBN introduces greater depth and batch normalization, offering a complementary evaluation of the distilled data’s robustness and generalizability. These experiments assess whether the performance patterns observed with ConvNet persist under a moderately more complex architecture, thereby strengthening the evidence for the reliability of the distilled datasets.

\subsection{Effectiveness on Different Datasets and Settings}
Firstly, for completeness, we report the results of the Naive attack in Table~\ref{tab:eff_sneakdoor_naive}.

\begingroup
\setlength{\tabcolsep}{3pt}

\begin{table*}[htbp]
\centering
\setlength{\tabcolsep}{13pt}
\caption{Effectiveness on Different Datasets}
\begin{adjustbox}{width=\textwidth}
{
\begin{tabular}{cc*{2}{c>{\columncolor{gray!25}}c}}
\toprule
\multirow{2}{*}{\textbf{Dataset}} & \multirow{2}{*}{\textbf{Method}}  
  & \multicolumn{2}{c}{\textbf{\textsc{Sneakdoor}}}  
  & \multicolumn{2}{c}{\textbf{NAIVE}} \\
& & CTA & ASR & CTA & ASR \\
\midrule
\multirow{4}{*}{\textbf{CIFAR10}}
& DM  & 0.626$\pm$0.001 & 0.989$\pm$0.000 & 0.632$\pm$0.001 & 0.113$\pm$0.012 \\
& DC  & 0.537$\pm$0.000 & 0.996$\pm$0.000 & 0.552$\pm$0.001 & 0.102$\pm$0.007 \\
& IDM & 0.643$\pm$0.002 & 0.975$\pm$0.001 & 0.652$\pm$0.001 & 0.103$\pm$0.006 \\
& DAM & 0.591$\pm$0.001 & 0.979$\pm$0.001 & 0.582$\pm$0.001 & 0.086$\pm$0.003 \\
\midrule
\multirow{4}{*}{\textbf{STL10}}
& DM  & 0.598$\pm$0.001 & 0.973$\pm$0.000 & 0.621$\pm$0.001 & 0.103$\pm$0.006 \\
& DC  & 0.565$\pm$0.001 & 0.998$\pm$0.001 & 0.583$\pm$0.001 & 0.090$\pm$0.007 \\
& IDM & 0.658$\pm$0.001 & 0.979$\pm$0.001 & 0.667$\pm$0.001 & 0.102$\pm$0.007 \\
& DAM & 0.532$\pm$0.001 & 0.992$\pm$0.001 & 0.549$\pm$0.001 & 0.088$\pm$0.009 \\
\midrule
\multirow{4}{*}{\textbf{FMNIST}}
& DM  & 0.876$\pm$0.001 & 0.998$\pm$0.000 & 0.887$\pm$0.001 & 0.090$\pm$0.008 \\
& DC  & 0.851$\pm$0.001 & 0.998$\pm$0.000 & 0.857$\pm$0.001 & 0.086$\pm$0.002 \\
& IDM & 0.877$\pm$0.001 & 1.000$\pm$0.000 & 0.887$\pm$0.001 & 0.093$\pm$0.007 \\
& DAM & 0.877$\pm$0.000 & 0.996$\pm$0.000 & 0.881$\pm$0.001 & 0.098$\pm$0.005 \\
\midrule
\multirow{4}{*}{\textbf{SVHN}}
& DM  & 0.800$\pm$0.000 & 1.000$\pm$0.000 & 0.799$\pm$0.000 & 0.111$\pm$0.006 \\
& DC  & 0.687$\pm$0.000 & 1.000$\pm$0.000 & 0.699$\pm$0.001 & 0.115$\pm$0.011 \\
& IDM & 0.831$\pm$0.001 & 0.986$\pm$0.001 & 0.840$\pm$0.000 & 0.122$\pm$0.010 \\
& DAM & 0.782$\pm$0.001 & 1.000$\pm$0.000 & 0.770$\pm$0.000 & 0.112$\pm$0.006 \\
\midrule
\multirow{4}{*}{\shortstack{\textbf{TINY}\\\textbf{IMAGENET}}}
& DM  & 0.503$\pm$0.001 & 1.000$\pm$0.000 & 0.497$\pm$0.002 & 0.070$\pm$0.002 \\
& DC  & 0.432$\pm$0.002 & 1.000$\pm$0.000 & 0.421$\pm$0.002 & 0.019$\pm$0.001 \\
& IDM & 0.517$\pm$0.004 & 1.000$\pm$0.000 & 0.501$\pm$0.008 & 0.042$\pm$0.004 \\
& DAM & 0.482$\pm$0.003 & 1.000$\pm$0.000 & 0.462$\pm$0.003 & 0.042$\pm$0.002 \\
\bottomrule
\end{tabular}
}
\end{adjustbox}
\label{tab:eff_sneakdoor_naive}
\end{table*}

\endgroup
Table~\ref{tab:eff_datasets_sneakdoor_naive_doorping} and ~\ref{tab:eff_datasets_sneakdoor_simple_relax} reports the ASR and CTA of different dataset condensation methods using AlexNetBN as the condensation network across multiple datasets. The results reveal how distilled data behaves under both clean and backdoor settings when applied to AlexNetBN. This provides a comprehensive view of each attack’s robustness and generalization in adversarial contexts.


\begin{table*}[htbp]
\centering
\caption{Effectiveness on Different Datasets condensed with AlexNetBN}
\begin{adjustbox}{width=\textwidth}
{
\begin{tabular}{cc*{3}{c>{\columncolor{gray!25}}c}}
\toprule
\multirow{2}{*}{\textbf{Dataset}} & \multirow{2}{*}{\textbf{Method}}
  & \multicolumn{2}{c}{\textbf{\textsc{Sneakdoor}}}
  & \multicolumn{2}{c}{\textbf{NAIVE}}
  & \multicolumn{2}{c}{\textbf{DOORPING}} \\
& & CTA & ASR & CTA & ASR & CTA & ASR \\
\midrule
\multirow{4}{*}{\textbf{CIFAR10}}
& DM  & 0.595$\pm$0.001 & 0.947$\pm$0.004 & 0.608$\pm$0.002 & 0.093$\pm$0.011 & 0.505$\pm$0.001 & 1.000$\pm$0.000 \\
& DC  & 0.222$\pm$0.001 & 0.003$\pm$0.001 & 0.140$\pm$0.001 & 0.000$\pm$0.000 & 0.319$\pm$0.007 & 0.000$\pm$0.000 \\
& IDM & 0.700$\pm$0.002 & 0.946$\pm$0.003 & 0.739$\pm$0.002 & 0.104$\pm$0.009 & 0.639$\pm$0.003 & 1.000$\pm$0.000 \\
& DAM & 0.606$\pm$0.001 & 0.721$\pm$0.013 & 0.609$\pm$0.001 & 0.096$\pm$0.010 & 0.565$\pm$0.001 & 1.000$\pm$0.000 \\
\midrule
\multirow{4}{*}{\textbf{STL10}}
& DM  & 0.562$\pm$0.001 & 0.993$\pm$0.000 & 0.573$\pm$0.004 & 0.104$\pm$0.010 & 0.557$\pm$0.004 & 1.000$\pm$0.000 \\
& DC  & 0.155$\pm$0.006 & 0.003$\pm$0.002 & 0.178$\pm$0.001 & 0.000$\pm$0.000 & 0.278$\pm$0.003 & 1.000$\pm$0.000 \\
& IDM & 0.723$\pm$0.002 & 0.986$\pm$0.002 & 0.729$\pm$0.003 & 0.100$\pm$0.007 & 0.646$\pm$0.003 & 1.000$\pm$0.000 \\
& DAM & 0.584$\pm$0.001 & 0.962$\pm$0.003 & 0.603$\pm$0.004 & 0.101$\pm$0.010 & 0.565$\pm$0.000 & 1.000$\pm$0.000 \\
\midrule
\multirow{4}{*}{\textbf{FMNIST}}
& DM  & 0.822$\pm$0.000 & 1.000$\pm$0.000 & 0.844$\pm$0.001 & 0.090$\pm$0.010 & 0.636$\pm$0.005 & 1.000$\pm$0.000 \\
& DC  & 0.287$\pm$0.000 & 0.000$\pm$0.000 & 0.172$\pm$0.003 & 0.320$\pm$0.018 & 0.516$\pm$0.010 & 1.000$\pm$0.000 \\
& IDM & 0.844$\pm$0.001 & 0.978$\pm$0.002 & 0.858$\pm$0.001 & 0.113$\pm$0.003 & 0.736$\pm$0.001 & 1.000$\pm$0.000 \\
& DAM & 0.831$\pm$0.003 & 1.000$\pm$0.000 & 0.821$\pm$0.002 & 0.100$\pm$0.003 & 0.758$\pm$0.003 & 1.000$\pm$0.000 \\
\midrule
\multirow{4}{*}{\textbf{SVHN}}
& DM  & 0.622$\pm$0.020 & 1.000$\pm$0.000 & 0.697$\pm$0.007 & 0.124$\pm$0.006 & 0.774$\pm$0.001 & 1.000$\pm$0.000 \\
& DC  & 0.108$\pm$0.001 & 0.984$\pm$0.001 & 0.095$\pm$0.001 & 0.069$\pm$0.010 & 0.379$\pm$0.006 & 1.000$\pm$0.000 \\
& IDM & 0.880$\pm$0.001 & 0.966$\pm$0.001 & 0.886$\pm$0.001 & 0.116$\pm$0.010 & 0.781$\pm$0.002 & 1.000$\pm$0.000 \\
& DAM & 0.672$\pm$0.006 & 0.999$\pm$0.000 & 0.701$\pm$0.002 & 0.112$\pm$0.008 & 0.593$\pm$0.003 & 1.000$\pm$0.000 \\
\midrule
\multirow{4}{*}{\shortstack{\textbf{TINY}\\\textbf{IMAGENET}}}
& DM  & 0.463$\pm$0.002 & 0.920$\pm$0.013 & 0.457$\pm$0.003 & 0.011$\pm$0.002 & 0.485$\pm$0.002 & 1.000$\pm$0.000 \\
& DC  & 0.247$\pm$0.003 & 1.000$\pm$0.000 & 0.269$\pm$0.005 & 0.013$\pm$0.003 & 0.260$\pm$0.004 & 0.000$\pm$0.000 \\
& IDM & 0.260$\pm$0.005 & 0.860$\pm$0.013 & 0.284$\pm$0.007 & 0.000$\pm$0.000 & 0.293$\pm$0.006 & 1.000$\pm$0.000 \\
& DAM & 0.442$\pm$0.006 & 0.972$\pm$0.010 & 0.430$\pm$0.013 & 0.010$\pm$0.001 & 0.419$\pm$0.010 & 1.000$\pm$0.000 \\
\bottomrule
\end{tabular}
}
\end{adjustbox}
\label{tab:eff_datasets_sneakdoor_naive_doorping}
\end{table*}

\vspace{1em}

\begin{table*}[htbp]
\centering
\caption{Effectiveness on Different Datasets condensed with AlexNetBN}
\begin{adjustbox}{width=\textwidth}
{
\begin{tabular}{cc*{3}{c>{\columncolor{gray!25}}c}}
\toprule
\multirow{2}{*}{\textbf{Dataset}} & \multirow{2}{*}{\textbf{Method}}
  & \multicolumn{2}{c}{\textbf{\textsc{Sneakdoor}}}
  & \multicolumn{2}{c}{\textbf{SIMPLE}}
  & \multicolumn{2}{c}{\textbf{RELAX}} \\
& & CTA & ASR & CTA & ASR & CTA & ASR \\
\midrule
\multirow{4}{*}{\textbf{CIFAR10}}
& DM  & 0.595$\pm$0.001 & 0.947$\pm$0.004 & 0.581$\pm$0.001 & 0.183$\pm$0.013 & 0.603$\pm$0.001 & 0.704$\pm$0.022 \\
& DC  & 0.222$\pm$0.001 & 0.003$\pm$0.001 & 0.169$\pm$0.002 & 0.000$\pm$0.000 & 0.152$\pm$0.001 & 0.047$\pm$0.018 \\
& IDM & 0.700$\pm$0.002 & 0.946$\pm$0.003 & 0.727$\pm$0.001 & 0.146$\pm$0.009 & 0.252$\pm$0.002 & 0.636$\pm$0.024 \\
& DAM & 0.606$\pm$0.001 & 0.721$\pm$0.013 & 0.584$\pm$0.001 & 0.204$\pm$0.024 & 0.591$\pm$0.002 & 0.978$\pm$0.004 \\
\midrule
\multirow{4}{*}{\textbf{STL10}}
& DM  & 0.562$\pm$0.001 & 0.993$\pm$0.000 & 0.544$\pm$0.002 & 0.092$\pm$0.007 & 0.550$\pm$0.003 & 0.706$\pm$0.010 \\
& DC  & 0.155$\pm$0.006 & 0.003$\pm$0.002 & 0.121$\pm$0.008 & 0.117$\pm$0.013 & 0.144$\pm$0.003 & 0.574$\pm$0.036 \\
& IDM & 0.723$\pm$0.002 & 0.986$\pm$0.002 & 0.724$\pm$0.003 & 0.102$\pm$0.013 & 0.719$\pm$0.002 & 0.668$\pm$0.029 \\
& DAM & 0.584$\pm$0.001 & 0.962$\pm$0.003 & 0.568$\pm$0.003 & 0.098$\pm$0.010 & 0.566$\pm$0.005 & 0.872$\pm$0.022 \\
\midrule
\multirow{4}{*}{\textbf{FMNIST}}
& DM  & 0.822$\pm$0.000 & 1.000$\pm$0.000 & 0.812$\pm$0.006 & 0.952$\pm$0.009 & 0.816$\pm$0.003 & 1.000$\pm$0.000 \\
& DC  & 0.287$\pm$0.000 & 0.000$\pm$0.000 & 0.161$\pm$0.001 & 0.895$\pm$0.018 & 0.171$\pm$0.001 & 0.646$\pm$0.033 \\
& IDM & 0.844$\pm$0.001 & 0.978$\pm$0.002 & 0.849$\pm$0.001 & 0.231$\pm$0.028 & 0.856$\pm$0.001 & 0.719$\pm$0.015 \\
& DAM & 0.831$\pm$0.003 & 1.000$\pm$0.000 & 0.806$\pm$0.002 & 0.482$\pm$0.128 & 0.811$\pm$0.002 & 1.000$\pm$0.000 \\
\midrule
\multirow{4}{*}{\textbf{SVHN}}
& DM  & 0.622$\pm$0.020 & 1.000$\pm$0.000 & 0.484$\pm$0.010 & 0.071$\pm$0.005 & 0.672$\pm$0.009 & 0.978$\pm$0.007 \\
& DC  & 0.108$\pm$0.001 & 0.984$\pm$0.001 & 0.157$\pm$0.006 & 0.060$\pm$0.006 & 0.137$\pm$0.004 & 0.119$\pm$0.027 \\
& IDM & 0.880$\pm$0.001 & 0.966$\pm$0.001 & 0.880$\pm$0.001 & 0.118$\pm$0.008 & 0.874$\pm$0.001 & 1.000$\pm$0.001 \\
& DAM & 0.672$\pm$0.006 & 0.999$\pm$0.000 & 0.693$\pm$0.006 & 0.092$\pm$0.007 & 0.692$\pm$0.003 & 0.996$\pm$0.003 \\
\midrule
\multirow{4}{*}{\shortstack{\textbf{TINY}\\\textbf{IMAGENET}}}
& DM  & 0.463$\pm$0.002 & 0.920$\pm$0.013 & 0.457$\pm$0.003 & 0.011$\pm$0.002 & 0.449$\pm$0.003 & 0.835$\pm$0.017 \\
& DC  & 0.247$\pm$0.003 & 1.000$\pm$0.000 & 0.200$\pm$0.008 & 0.000$\pm$0.000 & 0.259$\pm$0.002 & 0.471$\pm$0.023 \\
& IDM & 0.260$\pm$0.005 & 0.860$\pm$0.013 & 0.337$\pm$0.006 & 0.053$\pm$0.008 & 0.313$\pm$0.007 & 0.759$\pm$0.058 \\
& DAM & 0.442$\pm$0.006 & 0.972$\pm$0.010 & 0.443$\pm$0.007 & 0.013$\pm$0.002 & 0.441$\pm$0.004 & 0.787$\pm$0.027 \\
\bottomrule
\end{tabular}
}
\end{adjustbox}
\label{tab:eff_datasets_sneakdoor_simple_relax}
\end{table*}


Moreover, we have expanded our evaluation in two key directions: (1) \textbf{\textit{incorporating a larger, higher-resolution dataset}}, ImageNette (resolution $3\times224\times224$), as shown in Table~\ref{tab:imagenette}, and (2) \textbf{\textit{evaluating key parameters}} on STL10 (resolution $3\times96\times96$), including \textbf{\textit{ipc}} (the number of synthetic samples per clas), \textbf{\textit{perturbation bound $\varepsilon$}}, and \textbf{\textit{poisoning ratio}}, as shown in Table \ref{tab:ipc}, \ref{tab:bound}, and \ref{tab:ratio}.

Table~\ref{tab:imagenette} reports \textsc{Sneakdoor}’s attack performance under DM and DAM on the ImageNette dataset, demonstrating that \textbf{\textit{\textsc{Sneakdoor} remains effective on higher-resolution, larger-scale data}}. Due to computational resources constraints,  we could not include results for DC and IDM, as a single run with DC or IDM takes about three to four days, making full tuning impractical. We plan to include these results in a future version to provide a more complete picture of performance across algorithms and settings.

\begin{table}[htbp]
  \centering
  \small
  \caption{Attack Performance of \textsc{Sneakdoor} on the ImageNette Dataset.}
  \begin{tabular}{lccccc}
    \hline
    \\[-2ex]
    Method & ASR & CTA & PNSR & SSIM & IS \\
    \\[-2ex] \hline \\[-2ex]
    DM  & 0.9809±0.0000 & 0.5625±0.0007 & 68.62 & 0.6673 & 2.25e-4 \\
    DAM & 0.9429±0.0008 & 0.4598±0.0003 & 72.16 & 0.6814 & 2.08e-4 \\
    \\[-2ex] \hline
  \end{tabular}
  \label{tab:imagenette}
\end{table}

\begin{table}[htbp]
  \centering
  \small
  \caption{Impact of IPC on Attack Performance}
  \begin{tabular}{lcccccc}
    \hline\\[-2ex]
    Method & ipc & ASR & CTA & PSNR & SSIM & IS\\
    \\[-2ex] \hline \\[-2ex]
    DM   & 10 & 0.8735±0.0009 & 0.4347±0.0003 & 73.0381 & 0.8211 & 9.05e-5 \\
    DM   & 20 & 0.9872±0.0005 & 0.4882±0.0008 & 73.5021 & 0.7950 & 1.32e-4 \\
    DM   & 50 & 0.9725±0.0000 & 0.5979±0.0006 & 70.1216 & 0.8066 & 1.41e-4 \\
    IDM  & 10 & 0.9778±0.0015 & 0.5965±0.0004 & 74.1393 & 0.8199 & 1.05e-4 \\
    IDM  & 20 & 0.9573±0.0009 & 0.6217±0.0006 & 73.9608 & 0.8049 & 2.39e-4\\
    IDM  & 50 & 0.9790±0.0009 & 0.6582±0.0005 & 70.1548 & 0.7554 & 1.40e-4\\
    DAM  & 10 & 0.8910±0.0015 & 0.3678±0.0006 & 73.6366 & 0.8106 & 9.21e-5 \\
    DAM  & 20 & 0.8902±0.0025 & 0.4522±0.0004 & 73.8535 & 0.8146 & 9.22e-5 \\
    DAM  & 50 & 0.9918±0.0006 & 0.5324±0.0007 & 73.7877 & 0.8245 & 9.14e-5 \\
    DC   & 10 & 0.9258±0.0035 & 0.4675±0.0006 & 73.1598 & 0.8072 & 9.54e-5 \\
    DC   & 20 & 0.9243±0.0035 & 0.5282±0.0002 & 73.0987 & 0.8018 & 9.05e-5 \\
    DC   & 50 & 0.9975±0.0008 & 0.5653±0.0011 & 71.2365 & 0.7550 & 7.26e-5 \\
    \\[-2ex] \hline
  \end{tabular}
  \label{tab:ipc}
\end{table}

As shown in Table~\ref{tab:ipc}, varying ipc notably affects CTA, while ASR and STE metrics (PSNR, SSIM, IS) remain relatively stable. This is expected, as fewer samples per class reduce the fidelity of clean distribution modeling, impacting generalization. In contrast, ASR stays high across ipc values, indicating that once embedded, the backdoor remains effective even with limited data. STE metrics also show minimal change, suggesting the perturbations remain visually subtle and robust.

As shown in Table~\ref{tab:bound}, increasing the perturbation bound $\varepsilon$ improves ASR but reduces STE, as reflected in lower PSNR, SSIM, and IS. This is expected, since a larger $\varepsilon$ allows stronger and more noticeable triggers, enhancing attack success at the expense of stealth. Notably, CTA remains stable across $\varepsilon$ values, indicating that stronger triggers do not significantly harm generalization on clean data. These results highlight a trade-off between ASR and STE controlled by $\varepsilon$.

\begin{table}[htbp]
  \centering
  \small
  \caption{Impact of Perturbation Bound $\varepsilon$ on Attack Performance}
  \begin{tabular}{lcccccc}
    \hline\\[-2ex]
    Method & $\varepsilon$ & ASR & CTA & PSNR & SSIM & IS\\
    \\[-2ex] \hline \\[-2ex]
    DM   & 0.1 & 0.7755±0.0049 & 0.6045±0.0009 & 82.1241 & 0.9548 & 2.97e-5 \\
    DM   & 0.2 & 0.9332±0.0006 & 0.5824±0.0008 & 76.9565 & 0.8769 & 5.46e-5 \\
    DM   & 0.3 & 0.9732±0.000 & 0.5981±0.0010 & 74.0076 & 0.7963 & 6.32e-5 \\
    IDM  & 0.1 & 0.5400±0.0076 & 0.6627±0.0010 & 78.7475 & 0.7914 & 1.14e-4 \\
    IDM  & 0.2 & 0.7905±0.0073 & 0.6624±0.0013 & 76.4274 & 0.7931 & 1.30e-4 \\
    IDM  & 0.3 & 0.9790±0.0009 & 0.6582±0.0005 & 70.1548 & 0.8054 & 1.40e-4 \\
    DAM  & 0.1 & 0.6785±0.0022 & 0.5278±0.0012 & 82.0221 & 0.9594 & 3.06e-5 \\
    DAM  & 0.2 & 0.8715±0.0015 & 0.5389±0.0007 & 76.8882 & 0.8916 & 5.51e-5 \\
    DAM  & 0.3 & 0.9918±0.0006 & 0.5324±0.0007 & 73.7877 & 0.8245 & 9.14e-5\\
    DC   & 0.1 & 0.6128±0.004 & 0.5743±0.0002 & 78.8841 & 0.7633 & 7.54e-5 \\
    DC   & 0.2 & 0.7828±0.0056 & 0.58±0.0011 & 73.3082 & 0.5337 & 1.06e-4 \\
    DC   & 0.3 & 0.9980 ± 0.0010 & 0.5650±0.0010 & 71.2365 & 0.5551 & 7.25e-5 \\
    \\[-2ex] \hline
  \end{tabular}
  \label{tab:bound}
\end{table}

\begin{table}[htbp]
  \centering
  \small
  \caption{Impact of Poisoning Ratio on Attack Performance}
  \begin{tabular}{lcccccc}
    \hline\\[-2ex]
    Method & poison ratio & ASR & CTA & PSNR & SSIM & IS \\
    \\[-2ex]\hline\\[-2ex]
    DM   & 0.10 & 0.8810±0.0020 & 0.5986±0.001 & 74.0086 & 0.8285 & 8.82e-5 \\
    DM   & 0.25 & 0.8970±0.0019 & 0.6009±0.0009 & 73.7735 & 0.7942 & 9.55e-5 \\
    DM   & 0.5 & 0.9725±0.0000 & 0.5979±0.0006 & 73.0076 & 0.7963 & 1.14e-4 \\
    IDM  & 0.10 & 0.8205±0.0026 & 0.6645±0.0015 & 74.0362 & 0.7803 & 2.61e-4 \\
    IDM  & 0.25 & 0.8615±0.0044 & 0.6592±0.0007 & 70.2375 & 0.7788 & 1.33e-4 \\
    IDM  & 0.5 & 0.9790±0.0009 & 0.6582±0.0005 & 70.1548 & 0.7554 & 1.40e-4 \\
    DAM  & 0.10 & 0.5073±0.0035 & 0.5526±0.0003 & 74.2949 & 0.8200 & 8.10e-5 \\
    DAM  & 0.25 & 0.7820±0.0017 & 0.5488±0.0006 & 73.5737 & 0.8429 & 1.11e-4 \\
    DAM  & 0.5 & 0.9918±0.0006 & 0.5324±0.0007 & 73.7877 & 0.8245 & 9.14e-5 \\
    DC   & 0.10 & 0.7912±0.0041 & 0.5745±0.0007 & 69.7258 & 0.5573 & 1.32e-4 \\
    DC   & 0.25 & 0.8627±0.0031 & 0.5851±0.0005 & 70.4030 & 0.5113 & 1.49e-4 \\
    DC   & 0.5 & 0.9980±0.0010 & 0.5650±0.0010 & 71.2365 & 0.5551 & 7.25e-5 \\
    \\[-2ex]\hline
  \end{tabular}
  \label{tab:ratio}
\end{table}

As shown in Table~\ref{tab:ratio}, increasing the poisoning ratio improves the ASR, which aligns with the intuition that more poisoned samples enhance the trigger’s influence in the condensed dataset. However, this improvement comes with a slight degradation in CTA. Interestingly, the decline in CTA is relatively limited even at higher poisoning ratios (\textit{e.g.}, 0.5), suggesting that the trigger's interference with the clean distribution remains modest. Nevertheless, the reliance on a relatively high poisoning ratio to achieve optimal attack effectiveness highlights a limitation of the current approach.

\subsection{Stealthiness on CIFAR10, SVHN, and FMNIST}

We have included stealthiness for the remaining datasets, \textit{i.e.}, CIFAR10, SVHN, and FMNIST. These additional results offer a comprehensive assessment of \textsc{Sneakdoor}'s visual imperceptibility across diverse datasets. Notably, we omit the Inception Score (IS) evaluation for FMNIST because it is a single-channel (grayscale) dataset, which is incompatible with the standard IS computation that relies on a pre-trained Inception network trained on RGB images. Applying IS directly to grayscale data would yield unreliable and uninformative results. 

\begin{table}[htbp]
  \centering
  \small
  \caption{PSNR, SSIM, and IS on CIFAR10, SVHN, and FMNIST}
  \label{tab:stealthiness_all}
  \begin{adjustbox}{width=\textwidth}
    {
  \begin{tabular}{ll|ccc|ccc|ccc}
    \toprule
    Method   & Backdoor  & \multicolumn{3}{c|}{CIFAR-10}      & \multicolumn{3}{c|}{SVHN}          & \multicolumn{3}{c}{FMNIST} \\
             &           & PSNR     & SSIM     & IS      & PSNR     & SSIM     & IS      & PSNR     & SSIM     & IS      \\
    \midrule
    \multirow{5}{*}{DM}
      & \textsc{Sneakdoor} & 73.94 & 0.61 & 5.80e-05 & 74.68 & 0.77 & 3.90e-05 & 58.41 & 0.39 & -- \\
      & Doorping  & 59.85 & 0.08 & 2.30e-04 & 60.27 & 0.08 & 2.08e-04 & 55.68 & 0.12 & -- \\
      & Relax     & 60.97 & -0.01 & 2.48e-04 & 61.47 & -0.14 & 2.45e-04 & 51.88 & -0.07 & -- \\
      & naive     & 63.67 & 0.15 & 3.56e-04 & 62.27 & 0.10 & 4.60e-04 & 54.15 & 0.10 & -- \\
      & Simple    & 60.98 & 0.69 & 8.10e-05 & 61.59 & 0.74 & 7.95e-05 & 54.01 & 0.00 & -- \\
    \midrule
    \multirow{5}{*}{DC}
      & \textsc{Sneakdoor} & 70.48 & 0.46 & 7.10e-05 & 73.15 & 0.42 & 8.10e-05 & 57.39 & 0.24 & -- \\
      & Doorping  & 59.22 & 0.05 & 2.43e-04 & 61.25 & 0.06 & 2.00e-04 & 60.11 & 0.52 & -- \\
      & Relax     & 61.37 & 0.04 & 2.38e-04 & 62.17 & -0.04 & 2.43e-04 & 52.15 & -0.11 & -- \\
      & naive     & 64.46 & 0.18 & 3.62e-04 & 60.45 & 0.04 & 4.92e-04 & 54.21 & 0.06 & -- \\
      & Simple    & 60.74 & 0.66 & 8.70e-05 & 61.44 & 0.72 & 8.08e-05 & 53.99 & 0.00 & -- \\
    \midrule
    \multirow{5}{*}{IDM}
      & \textsc{Sneakdoor} & 74.88 & 0.77 & 4.40e-05 & 72.19 & 0.68 & 6.30e-05 & 57.16 & 0.10 & -- \\
      & Doorping  & 59.23 & 0.10 & 2.23e-04 & 59.66 & 0.06 & 2.17e-04 & 57.26 & 0.06 & -- \\
      & Relax     & 61.18 & 0.02 & 2.46e-04 & 61.17 & -0.20 & 2.70e-04 & 52.04 & -0.08 & -- \\
      & naive     & 64.23 & 0.14 & 3.44e-04 & 62.05 & 0.07 & 5.02e-04 & 54.15 & 0.05 & -- \\
      & Simple    & 61.05 & 0.69 & 8.60e-05 & 61.21 & 0.70 & 8.00e-05 & 54.23 & 0.00 & -- \\
    \midrule
    \multirow{5}{*}{DAM}
      & \textsc{Sneakdoor} & 74.40 & 0.74 & 4.50e-05 & 78.91 & 0.74 & 4.30e-05 & 57.39 & 0.24 & -- \\
      & Doorping  & 59.52 & 0.08 & 1.62e-04 & 59.67 & 0.08 & 1.05e-04 & 57.16 & 0.10 & -- \\
      & Relax     & 61.19 & 0.02 & 2.31e-04 & 62.36 & -0.24 & 2.04e-04 & 51.83 & -0.10 & -- \\
      & naive     & 62.99 & 0.13 & 4.53e-04 & 60.43 & 0.04 & 5.39e-04 & 55.07 & 0.12 & -- \\
      & Simple    & 60.85 & 0.64 & 8.70e-05 & 61.78 & 0.75 & 7.95e-05 & 54.07 & 0.00 & -- \\
    \bottomrule
  \end{tabular}
  }
  \end{adjustbox}
\end{table}

\subsection{Effectiveness on Cross Architectures}

We further include cross-architecture evaluations with AlexNetBN. This setting tests the transferability of the backdoor attack to a moderately different network from the condensation model. The results offer additional evidence of the generalization and robustness of \textsc{Sneakdoor} across architectures. This property is critical for practical deployment in real-world scenarios.


\begingroup
\setlength{\tabcolsep}{3pt}

\begin{table*}[ht]
\centering
\caption{Cross‐architecture CTA and ASR condensed with AlexNetBN}
\begin{adjustbox}{width=\textwidth}
{
\begin{tabular}{cc*{4}{c>{\columncolor{gray!25}}c}}
\toprule
\multirow{2}{*}{\textbf{Dataset}} & \multirow{2}{*}{\textbf{Network}}
  & \multicolumn{2}{c}{\textbf{DM}}
  & \multicolumn{2}{c}{\textbf{DC}}
  & \multicolumn{2}{c}{\textbf{IDM}}
  & \multicolumn{2}{c}{\textbf{DAM}} \\
 & & CTA & ASR & CTA & ASR & CTA & ASR & CTA & ASR \\
\midrule
\multirow{3}{*}{\textbf{CIFAR10}}
 & VGG11   & 0.544$\pm$0.000 & 0.961$\pm$0.000 
           & 0.209$\pm$0.000 & 0.009$\pm$0.000 
           & 0.673$\pm$0.000 & 0.945$\pm$0.001 
           & 0.542$\pm$0.000 & 0.733$\pm$0.001 \\
 & ResNet  & 0.495$\pm$0.001 & 0.915$\pm$0.002 
           & 0.186$\pm$0.000 & 0.009$\pm$0.000 
           & 0.671$\pm$0.001 & 0.926$\pm$0.001 
           & 0.500$\pm$0.001 & 0.491$\pm$0.001 \\
 & ConvNet & 0.585$\pm$0.001 & 0.807$\pm$0.002 
           & 0.216$\pm$0.001 & 0.004$\pm$0.001 
           & 0.638$\pm$0.001 & 0.951$\pm$0.002 
           & 0.582$\pm$0.001 & 0.457$\pm$0.005 \\
\midrule
\multirow{3}{*}{\textbf{STL10}}
 & VGG11   & 0.527$\pm$0.001 & 0.921$\pm$0.000 
           & 0.195$\pm$0.001 & 0.012$\pm$0.001 
           & 0.694$\pm$0.000 & 0.947$\pm$0.002 
           & 0.547$\pm$0.001 & 0.924$\pm$0.002 \\
 & ResNet  & 0.413$\pm$0.001 & 0.999$\pm$0.000 
           & 0.160$\pm$0.001 & 0.011$\pm$0.001 
           & 0.644$\pm$0.001 & 0.991$\pm$0.001 
           & 0.445$\pm$0.002 & 0.995$\pm$0.000 \\
 & ConvNet & 0.532$\pm$0.000 & 0.841$\pm$0.002 
           & 0.180$\pm$0.000 & 0.152$\pm$0.005 
           & 0.693$\pm$0.001 & 0.828$\pm$0.011 
           & 0.555$\pm$0.001 & 0.997$\pm$0.001 \\
\midrule
\multirow{3}{*}{\shortstack{\textbf{TINY}\\\textbf{IMAGENET}}}
 & VGG11   & 0.427$\pm$0.001 & 0.920$\pm$0.000 
           & 0.174$\pm$0.002 & 0.860$\pm$0.000 
           & 0.435$\pm$0.003 & 0.588$\pm$0.024 
           & 0.437$\pm$0.002 & 0.960$\pm$0.000 \\
 & ResNet  & 0.361$\pm$0.002 & 0.800$\pm$0.000 
           & 0.227$\pm$0.002 & 0.716$\pm$0.008 
           & 0.228$\pm$0.004 & 0.360$\pm$0.036 
           & 0.391$\pm$0.002 & 1.000$\pm$0.000 \\
 & ConvNet & 0.443$\pm$0.003 & 0.604$\pm$0.008 
           & 0.217$\pm$0.003 & 0.932$\pm$0.010 
           & 0.335$\pm$0.009 & 0.604$\pm$0.015 
           & 0.430$\pm$0.004 & 0.884$\pm$0.015 \\
\bottomrule
\end{tabular}
}
\end{adjustbox}
\label{tab:alexnetbn_cross_architecture}
\end{table*}

\endgroup

\subsection{Visual Analysis of Trigger Stealthiness}
We provide visualizations of original images after injecting the trigger during inference. Figure \ref{fig:stl10_bd} illustrates the effect following trigger injection. The images demonstrate the trigger’s subtlety and stealthiness. Changes to the original images are minimal and barely perceptible. Despite this, the trigger effectively activates the backdoor in the model. These visual results emphasize the challenge of detecting such backdoors through simple inspection. They also underscore the importance of robust defenses against stealthy triggers.

\begin{figure}
    \centering
    \includegraphics[width=0.3\linewidth]{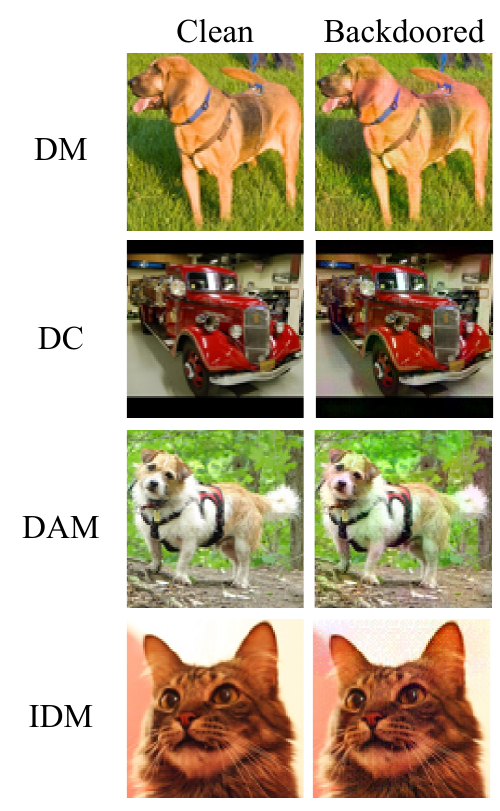}
    \caption{STL10 Stealthiness Illustration}
    \label{fig:stl10_bd}
    \vspace{-5pt}
\end{figure}

\subsection{Hyper-parameter Settings}
We have provided the full set of optimization hyperparameters used for \textsc{Sneakdoor} on the STL10 dataset across four condensation baselines: DM, DC, IDM, and DAM, including learning rates, number of epochs, batch sizes, etc. These details are listed in Tab.5 - Tab.8, allowing replication of our experiments. In addition, we will release the full source code in a future version of the paper. This will include the complete training pipeline for both the trigger generator and dataset condensation procedures. Our goal is to ensure that the community can easily reproduce and extend our work. 

The overall method is divided into four stages:

1. Training the Surrogate Model. The surrogate model serves two key purposes: (i) estimating inter-class boundary vulnerability (ICBV), and (ii) guiding the training of the trigger generator.

\begin{table}[htbp]
\centering
\caption{Hyperparameters for Surrogate Model Training}
\begin{tabular}{ll}
\toprule
\textbf{Hyperparameter} & \textbf{Value} \\
\midrule
Optimizer & SGD \\
Batch size & 256 \\
Learning rate & 0.01 \\
Momentum & 0.9 \\
Weight decay & 0.0005 \\
Epochs & 50 \\
\bottomrule
\end{tabular}
\end{table}

2. Training the Trigger Generator $G_\phi$. The generator learns to produce input-aware perturbations that cause misclassification.

\begin{table}[htbp]
\centering
\caption{Hyperparameters for Trigger Generator Training}
\begin{tabular}{ll}
\toprule
\textbf{Hyperparameter} & \textbf{Value} \\
\midrule
Learning rate & 5e-5 \\
Perturbation scaling factor $\alpha$ & 0.25 \\
Maximum perturbation bound $\varepsilon$ & 0.5 \\
\bottomrule
\end{tabular}
\end{table}

3. Malicious Condensation. This phase incorporates the trigger signal into the synthetic dataset via a standard condensation framework.

\begin{table}[htbp]
\centering
\caption{Hyperparameters for Malicious Dataset Condensation}
\begin{tabular}{ll}
\toprule
\textbf{Hyperparameter} & \textbf{Value} \\
\midrule
Images per class (IPC) & 50 \\
Condensation epochs & 20000 \\
Synthesis learning rate & 1.0 \\
Batch size & 256 \\
Optimizer & Adam \\
\bottomrule
\end{tabular}
\end{table}

4. Downstream Model Training. Standard training on the poisoned condensed dataset using typical optimization settings.

\begin{table}[htbp]
\centering
\caption{Hyperparameters for Downstream Model Training}
\begin{tabular}{ll}
\toprule
\textbf{Hyperparameter} & \textbf{Value} \\
\midrule
Optimizer & SGD \\
Batch size & 256 \\
Learning rate & 0.01 \\
Momentum & 0.9 \\
Weight decay & 0.0005 \\
Epochs & 10000 \\
\bottomrule
\end{tabular}
\end{table}

\end{document}